\newtheorem{theorem}{Theorem}
\newtheorem{definition}[theorem]{Definition}
\newtheorem{corollary}[theorem]{Corollary}
\newtheorem{proposition}[theorem]{Proposition}
\newtheorem{lemma}[theorem]{Lemma}
\newtheorem{example}[theorem]{Example}
\numberwithin{theorem}{section} 
\numberwithin{equation}{section} 
\begin{document}
\baselineskip = 15.5pt
\centerline{\Large\bf  Geometry of Riccati equations over}
\centerline{\Large\bf  normed division algebras}
\medskip
\centerline{J. de Lucas$^*$, M. Tobolski$^*$ and S. Vilari\~no$^{**}$}
\medskip
\medskip

\centerline{$^*$Department of Mathematical Methods in Physics, University of Warsaw,}
\centerline{ul. Pasteura 5, 02-093, Warszawa, Poland}
\medskip
\centerline{$^{**}$Centro Universitario de la Defensa Zaragoza. IUMA. Universidad de
Zaragoza.}
\centerline{
Academia General Militar. C. de Huesca s/n, E-50090, Zaragoza, Spain}
\begin{abstract}

This work presents and studies Riccati equations over finite-dimensional normed division algebras. We prove that a Riccati equation over a finite-dimensional normed division algebra $A$ is a particular case of conformal Riccati equation on an Euclidean space and it can be considered as a curve in a Lie algebra  of vector fields $V\simeq\mathfrak{so}(\dim A+1,1)$.
Previous results on known types of Riccati equations are recovered from a new viewpoint. A new type of Riccati equations, the octonionic Riccati equations, are extended to the octonionic projective line $\mathbb{O}{\rm P}^1$. As a~new physical application, quaternionic Riccati equations are applied to study quaternionic Schr\"odinger equations on 1+1 dimensions.
\end{abstract}
{\bf Keywords}: Lie system, normed division algebra, octonion, quaternion quantum mechanics, Riccati equation, Vessiot--Guldberg Lie algebra.
\\
{\bf MSC}: 34A26, 53Z05, 17B66.
\section{Introduction}
The {\it Riccati equations} \cite{In44,Ri24}, namely the differential equations\footnote{It is sometimes assumed $a(t)c(t)\neq0$. This condition does not change our main results and it only makes some proofs  unnecessarily more technical.}
\begin{equation}\label{RiccR}
\frac{dx}{dt}=a(t)+b(t)x+c(t)x^2, \qquad x\in \mathbb{R},
\end{equation}
with $a(t)$, $b(t), c(t)$ being arbitrary $t$-dependent real functions, play a very relevant r\^ole in physics and mathematics \cite{BLW91,NR02,SJH85}.
Despite its apparent simplicity, the general solution of a Riccati equation for arbitrary $a(t)$, $b(t)$ and $c(t)$ cannot be obtained by quadratures \cite{In44}. This has led to devise many methods to solve them (see e.g. \cite{MH13,Pi12}).

The general solution, $x(t)$, of a Riccati equation (\ref{RiccR}) can be brought into the form
\begin{equation}\label{sup}
x(t)=\frac{x_1(t)(x_3(t)-x_2(t))+kx_2(t)(x_3(t)-x_1(t))}{x_3(t)-x_2(t)+k(x_3(t)-x_1(t))},
\end{equation}
where $x_1(t),x_2(t),x_3(t)$ are different particular solutions to (\ref{RiccR}) and  $k\in\mathbb{R}$ (see \cite{Dissertationes,CGM07,In44,Wi83}). This allows us to
analyse the general solution of a Riccati equation in terms of just three particular ones.
For instance, it follows from (\ref{sup}) that all particular solutions to a Riccati equation are periodic if and only
if the Riccati equation has three periodic particular solutions. Additionally, (\ref{sup}) enables us to simplify  the numerical analysis of Riccati equations \cite{Wi83}.

Expression (\ref{sup}) shows that Riccati equations are Lie systems \cite{LS}. Indeed, recall that a {\it Lie system} is a nonautonomous ($t$-dependent) system of first-order ordinary differential equations whose general solution can be written as a $t$-independent function, the {\it superposition rule}, of a set of particular solutions and some
constants \cite{Dissertationes,LS,Wi83}. In view of (\ref{sup}), the general solution of  (\ref{RiccR}) reads $x(t)=\Phi(x_{(1)}(t),x_{(2)}(t),x_{(3)}(t),k)$, where
$$
\Phi:\mathbb{R}^3\times\mathbb{R}\ni (x_{(1)},x_{(2)},x_{(3)},k)\mapsto x\in \mathbb{R},
$$ and 
$$ x:=\frac{x_{(1)}(x_{(3)}-x_{(2)})+kx_{(2)}(x_{(3)}-x_{(1)})}{x_{(3)}-x_{(2)}+k(x_{(3)}-x_{(1)})}.
$$
The above expression is the most well-known example of nonlinear superposition rule \cite{In44}. This turns Riccati equations into Lie systems.

The Lie--Scheffers Theorem \cite{CGM07,LS} establishes when a system of first-order ordinary differential equations in normal form is a Lie system. It also states that every Lie system amounts to a curve within a finite-dimensional Lie algebra of vector fields: a {\it
Vessiot--Guldberg Lie algebra} \cite{Dissertationes,CGM07}. Lie proved that every such a Lie algebra on the real line is locally diffeomorphic around a generic point to a
Lie subalgebra of $\langle \partial_x,x\partial _x,x^2\partial_x\rangle$ \cite{GKP92,1880}. This explains geometrically
why the general solution of a Riccati equation can be expressed in the form (\ref{sup}).

Several generalisations of the Riccati equations have appeared over the years \cite{KKW01,Kr05,PW07}. For instance, research has
been performed on the complex Riccati equations \cite{KR09}
\begin{equation}\label{RiccC}
\frac{dz}{dt}=a(t)+b(t)z+c(t)z^2,\qquad z\in \mathbb{C},
\end{equation}
with $a(t)$, $ b(t)$ and $c(t)$ being arbitrary complex $t$-dependent functions.
Some works have been devoted to studying their periodic solutions \cite{Ca97} and applications in physics
\cite{Ca13,CK13,CSCR15,KK08,Sc12}.

Riccati equations were lately generalised to the realm of quaternionic numbers $\mathbb{H}$ by defining \cite{PW07}:
\begin{equation}\label{RiccH}
\frac{dq}{dt}=a(t)+b(t)q+qc(t)+qd(t)q,\qquad q\in\mathbb{H}.
\end{equation}
In this case, $a(t)$, $b(t)$, $c(t),d(t)$ are arbitrary $t$-dependent quaternionic-valued functions. These
equations have been studied  mathematically in \cite{PW07}. Other quaternionic generalizations of Riccati equations have found applications in fluids physics and reducing multidimensional Schr\"odinger equations \cite{KKW01,Kr05}. As far as the authors know, we here provide  the first physical application of (\ref{RiccH}) by showing that these equations appear in the so-called quaternionic quantum mechanics \cite{Em63,FJS62,Ka60}. Particular cases of quaternionic Riccati equations, e.g. linear ones, appear in \cite{GR15}.

As Riccati equations over $\mathbb{R}$ and $\mathbb{C}$ have quadratic terms, it is reasonable that their generalization to quaternions should also content such a term. Since quaternions are not commutative, the quadratic term in (\ref{RiccH}) is not uniquely determined: we could also have
 $d(t)q^2$ or/and $q^2d(t)$ \cite{PW07}. A reason to use  $qd(t)q$ is
that (\ref{RiccH}) has an analytic extension to the one-point compactification, $\bar{\mathbb{H}}$, of quaternions  \cite{PW07}.

As a first main result, this work introduces a new generalization of Riccati equations to the space $\mathbb{O}$ of octonionic numbers: the
{\it octonionic Riccati equations}. Since the Hurwitz Theorem \cite{Ra93} states that every finite-dimensional normed division
algebra is isomorphic to $\mathbb{R}$, $\mathbb{C}$, $\mathbb{H}$ or $\mathbb{O}$ and we have a natural embedding of normed division algebras
$\mathbb{R}\subset\mathbb{C}\subset \mathbb{H}\subset\mathbb{O}$, then octonionic
Riccati equations retrieve Riccati equations over $\mathbb{R},\mathbb{C},\mathbb{H}$ as particular cases.

We prove that octonionic Riccati equations are Lie systems. The lack of commutativity of octonions leads to
several other possible definitions of octonionic Riccati equations, but we show that they do not lead to Lie systems. As a by-product we demonstrate that
the same occurs with the alternative definitions of quaternionic Riccati equations proposed in \cite{PW07} giving rise to a second reason to define quaternionic Riccati equations as  in (\ref{RiccH}).

We define Riccati equations over finite-dimensional normed division algebras (NDA Riccati equations) and we prove that they can be considered as sub-cases of conformal Riccati equations. This leads to show that  a Riccati equation over a finite-dimensional normed division algebra $A$ admits a Vessiot--Guldberg Lie algebra, $V_{{A}}$, of conformal vector fields with respect to a Euclidean metric and $V_{A}\simeq\mathfrak{so}(\dim A+1,1)$, where  $\mathfrak{so}(p,q)$ stands for the Lie algebra of the indefinite special orthogonal group $SO(p,q)$ (see \cite{HN12} for details). Importantly, Lie algebras $\mathfrak{so}(p,1)$ are simple Lie algebras \cite{Tam1999,VKK92} and  $V_{{A}}$ is the smallest common Vessiot--Guldberg Lie algebra for all Riccati equations over $A$. These results are resumed in Table \ref{tableVG}.

Since Vessiot--Guldberg Lie algebras of NDA Riccati equations become extremely large, standard
differential
and algebraic methods for studying Lie algebras, e.g. the Killing form, become highly inefficient. Instead,
 we use differential calculus over  normed division algebras and conformal geometry.

The fact that NDA Riccati equations are Lie systems allows for the use of many techniques to describe their general solutions, superposition rules, constants of motion, geometric invariants, Lie symmetries, etcetera \cite{AW,BHLS15,EHLS16}. It is worth noting that geometric techniques, e.g. symplectic, Poisson, $k$-symplectic or Jacobi structures, may be applied to study their properties \cite{BCHLS13,LTV15,LV15,GL12}.

It is well known that Riccati equations can be described as a projection of a linear system on $\mathbb{R}^2_\times:=\mathbb{R}^2\backslash\{(0,0)\}$ (see \cite{In44} for details). We extend this result to the octonionic Riccati equations and find that they  can also be extended to the octonionic projective line. A similar result applies for all NDA Riccati equations.

\begin{table}[h] {\footnotesize
 \noindent
\medskip
\noindent\hfill
 \begin{tabular}{ |c| c |  c|}
\hline
&  &\\[-1.9ex]
A&Vessiot--Guldberg Lie algebra & Dimension
\\[+1.0ex]
\hline
 &  &\\[-1.9ex]
$\mathbb{R}$&
$\mathfrak{conf}(\mathbb{R})\simeq \mathfrak{so}(2,1)$ & $3$
\\[+1.0ex]
$\mathbb{C}$&
$\mathfrak{conf}(\mathbb{R}^2)\simeq \mathfrak{so}(3,1)$ & $6$
\\[+1.0ex]
$\mathbb{H}$ &
$\mathfrak{conf}(\mathbb{R}^4)\simeq \mathfrak{so}(5,1)$&$15$
\\[+1.0ex]
$\mathbb{O}$&
$\mathfrak{conf}(\mathbb{R}^8)\simeq \mathfrak{so}(9,1)$ &$45$
\\[+1.0ex]
\hline
 \end{tabular}
\hfill}
\medskip
\caption{{\small 
{\footnotesize Vessiot--Guldberg Lie algebras for NDA Riccati equations. We write
$\mathfrak{conf}(\mathbb{R}^n)$ for the finite-dimensional Lie algebra of conformal vector fields of the Euclidean space $\mathbb{R}^n$.
}}}
\label{tableVG}
\end{table}

Subsequently, the study of the existence of Vessiot--Guldberg Lie algebras of Hamiltonian vector fields for particular types of octonionic and quaternionic Riccati equations is addressed.
The Lie algebras of associated Hamiltonian functions are given, which can be useful to obtain in a geometric way their superposition rules with the so-called co-algebra method and to study further their properties and constants of motion \cite{BCHLS13}.

The structure of the paper goes as follows. Section 2 is devoted to describing the fundamental properties of finite-dimensional normed
division algebras. Conformal  vector fields and M\"obius transformations are briefly presented in Section 3. Section 4 introduces the most important properties of Lie systems. We define octonionic Riccati equations and NDA Riccati equations in Section 5. In Section 6 it is proved that NDA Riccati equations are a particular type of conformal Riccati equations.
We show that other alternative definitions of octonionic and quaternionic Riccati equations  are not Lie systems in Section 7.  In Section 8 we prove that NDA Riccati equations can be recovered as the projection of a linear system. As a by-product, we show that NDA Riccati equations can be extended to projective lines over normed division algebras. In Section 9, we study the existence of symplectic structures turning types of NDA Riccati equations into Lie--Hamilton systems, i.e. Lie systems with a Vessiot--Guldberg Lie algebra of Hamiltonian vector fields relative to a Poisson structure \cite{CLS13}. Section 10 shows that quaternionic Riccati equations appear in the study of the so-called quaternionic Schr\"odinger equations. Finally, Section 11 summarises the work and gives some future lines of research.
\section{Normed division algebras}
Let us survey the theory of finite-dimensional normed division algebras to recall some results to be used hereafter. We also detail an intermediate result, Lemma 2.5, which allows us to use algebraic techniques to study differential geometric properties of NDA Riccati equations.

\begin{definition}\label{NorAlg}
(See \cite{Ba02,DM15,Ra93}) We call {\rm normed division algebra}  a finite-dimen\-sio\-nal $\mathbb{R}$-algebra $A$ equipped
with a unit, $1$, and a norm $\|\cdot \|:A\rightarrow [0,+\infty[$ such that
$
\|ab\|=\|a\| \|b\|,\forall a,b\in A.
$
\end{definition}
\begin{minipage}{5cm}
\noindent\includegraphics[scale=0.24]{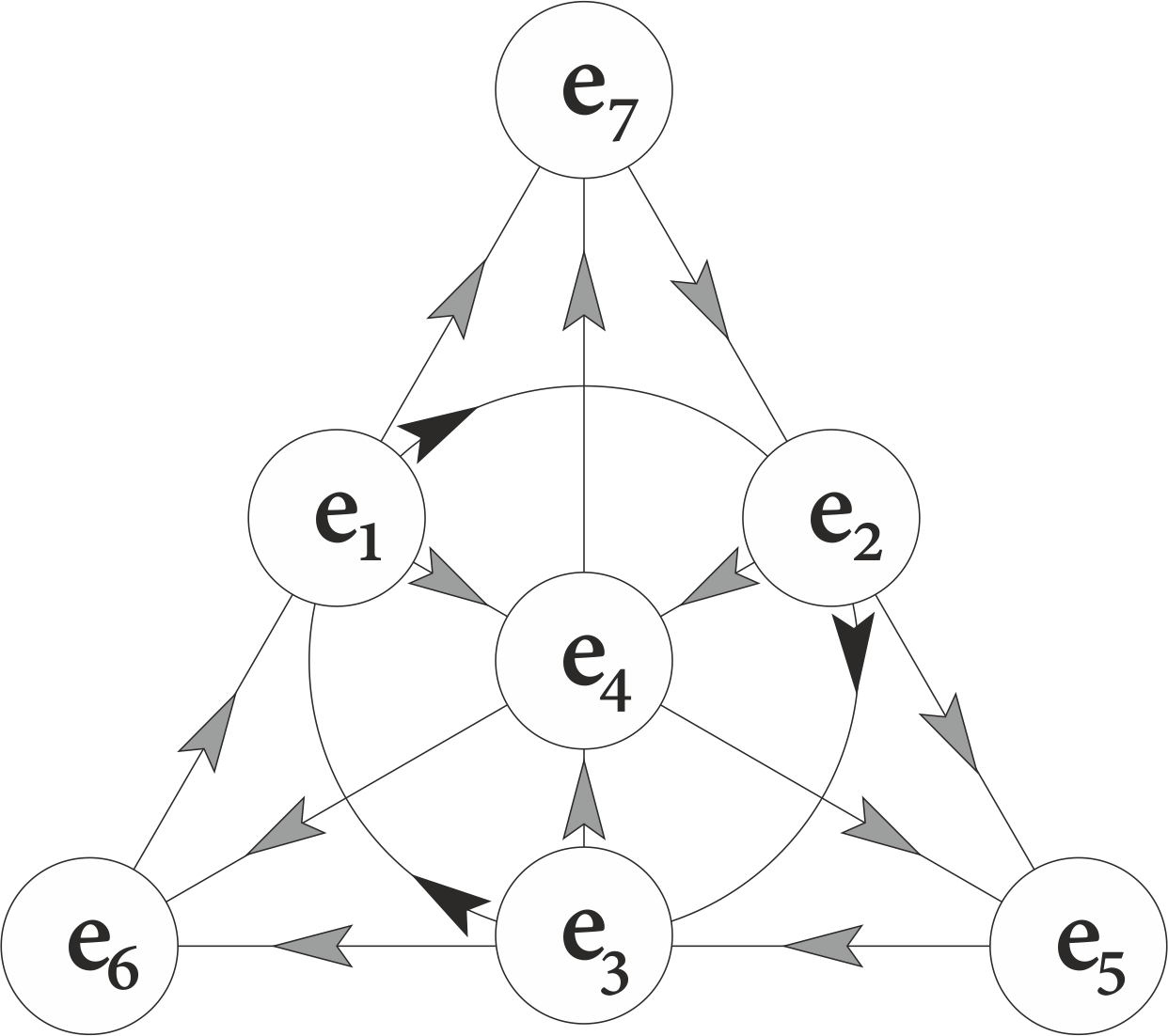}
\end{minipage}
\begin{minipage}{10cm}
\begin{example}
The space of octonions, $\mathbb{O}$, is an 8-dimensional normed division algebra. Its construction goes as follows. Given a basis $\{e_\alpha\}_{\alpha=0,\ldots,7}$ of $\mathbb{O}$, every $o\in \mathbb{O}$ takes the form $o=\sum_{\alpha=0}^7 o_\alpha e_\alpha$, with $o_\alpha\in\mathbb{R}$. The bilinear multiplication on $\mathbb{O}$ is determined by the multiplication on the elements
of our basis, which is given by the Fano diagram aside  (see \cite{Ba02} for details), and the relations $e_i^2:=-1$, for $i=1,\ldots,7$ and $e_0:=1$. We can endow $\mathbb{O}$ with a norm
$\|o\|:=[\sum_{\alpha=0}^7o_\alpha^2]^{1/2}$
turning it into a normed division algebra.
\end{example}
\end{minipage}

\begin{theorem}\label{Hurwitz}{(Hurwitz's theorem \cite{Ba02,DM15,Ra93})}
All normed division algebras are isomorphic to $\mathbb{R}$, $\mathbb{C}$, $\mathbb{H}$
or $\mathbb{O}$.
\end{theorem}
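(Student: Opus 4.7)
The plan is to polarise the norm into an inner product, extract a conjugation from it, show that $A$ is an \emph{alternative} algebra, and then apply the Cayley--Dickson doubling procedure to conclude that $A$ must be one of the four listed algebras.

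First I would define, by polarisation, the symmetric bilinear form $\langle a,b\rangle:=\tfrac12(\|a+b\|^2-\|a\|^2-\|b\|^2)$. A short calculation using the multiplicativity $\|ab\|=\|a\|\|b\|$ gives the fundamental identity $\langle ab,ac\rangle=\|a\|^2\langle b,c\rangle$ and its mirror $\langle ac,bc\rangle=\|c\|^2\langle a,b\rangle$. From these one deduces that the map $a\mapsto a^*:=2\langle a,1\rangle\,1-a$ is an antiautomorphism satisfying $a a^*=a^*a=\|a\|^2\,1$ and $(ab)^*=b^*a^*$. Next, by linearising the identities above around $a=1$ and $b=1$ one obtains the \emph{alternative laws} $a(ab)=a^2 b$ and $(ba)a=ba^2$ for all $a,b\in A$; this is the technical heart of the argument.

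With these tools in hand I would run the Cayley--Dickson classification. If $A=\mathbb{R}\cdot 1$ we are done; otherwise pick $i\in 1^\perp$ with $\|i\|=1$, so that $i^*=-i$ and $i^2=-1$, producing a subalgebra isomorphic to $\mathbb{C}$. If $A\neq\mathbb{R}[i]$, choose $j\in\mathbb{R}[i]^\perp$ with $\|j\|=1$; the alternative laws plus the inner-product identities force the subalgebra $\mathbb{R}[i]\oplus\mathbb{R}[i]j$ to have the multiplication rules $(x+yj)(u+vj)=(xu-v^*y)+(vx+yu^*)j$, i.e.~it is isomorphic to $\mathbb{H}$. Iterating once more produces, whenever $A\neq\mathbb{H}$, a copy of $\mathbb{O}$ inside $A$.

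The final step, and the main obstacle, is showing that the construction must \emph{terminate at} $\mathbb{O}$: if $A$ strictly contains a copy $B\cong\mathbb{O}$, then picking $\ell\in B^\perp$ of unit norm and carrying out one more Cayley--Dickson doubling would produce a 16-dimensional subalgebra (the sedenions) inside $A$. But a direct computation (using octonionic elements of the form $(e_i+e_j\ell)(e_k-e_m\ell)$ for a suitable choice of basis) exhibits explicit nonzero zero-divisors in this doubling, contradicting the fact that $A$ is a division algebra; equivalently, norm multiplicativity fails at the sedenionic stage because the alternative law is lost. Hence $A\in\{\mathbb{R},\mathbb{C},\mathbb{H},\mathbb{O}\}$, and the explicit Cayley--Dickson isomorphism built along the way is the required algebra isomorphism.
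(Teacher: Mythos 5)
The paper does not prove this statement: it is quoted as a classical result with citations to \cite{Ba02,DM15,Ra93}, so there is no internal proof to compare against. Your sketch is a correct outline of the standard argument given in those references (polarise the norm, derive the scaling/exchange identities and the conjugation, establish alternativity, run the Cayley--Dickson doubling, and stop at $\mathbb{O}$ because the sedenions acquire zero divisors), so it matches the approach the paper implicitly relies on; the only places where real work is being waved at are the derivation of the alternative laws by linearisation and the verification that $B\oplus B\ell$ is closed under multiplication with the Cayley--Dickson product, both of which are carried out in detail in \cite{Ba02}.
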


A normed division algebra $A$ admits an {\it inner product} $g:A\times A\rightarrow \mathbb{R}$ 
\begin{displaymath}
g(a,b):=\frac{\|a+b\|^2-\|a\|^2-\|b\|^2}{2},\qquad \forall a, b\in A.
\end{displaymath}
From its non-degeneracy, it follows that
\begin{equation}\label{Inner}
[\forall c\in A,\quad g(a,c)=g(b,c)] \quad \Longrightarrow \quad a=b.
\end{equation}
The space $A$ can be equipped with  the so-called {\it conjugation}
$*:A \ni a \mapsto a^*:=2g(a,1)-a\in A$, where $g(a,1)$ is understood as $g(a,1)1$. Observe that $aa^*=a^*a=\|a\|^2$. Then, $a\in A\backslash \{0\}$ has  inverse
$a^{-1}:={a^*}/{\|a\|^2}.$
\begin{example} In the case of the octonions, $e_0^*=e_0$ and $e_i^*=-e_i$ for $i=1,\ldots,7$. Thus,
every $o\in\mathbb{O}$ can be written in a unique way as $o=o_0+\bar{o}$, where $o_0:=[o^*+o]/2\in \mathbb{R}$ is called the {\rm scalar part} of $o$ and $\bar{o}:=[o-o^*]/2$ is referred to as {\rm its
vectorial part}.  Hence, $*:\mathbb{O}\ni o_0+\bar{o} \mapsto o_0-\bar{o}\in \mathbb{O}$.
\end{example}
Octonions form a particular type of composition algebra \cite{SC09,Ja58}. Therefore, they satisfy the following laws:
\begin{eqnarray}
g(ab, ac)=\|a\|^2g(b,c),\quad g(ac,bc)=g(a,b)\|c\|^2,\quad\quad \textrm{(scaling)},\quad\label{Scale}\\
g(ab, cd)=2g(a,c)g(b,d)-g(ad,cb),\quad\quad\quad\quad\textrm{(exchange)},\,\,\,\label{Exchange}\\
\,\,g(ab,c)=g(b,a^* c),\quad g(ab,c)=g(a,cb^*),\quad\quad\quad\quad \textrm{(braid)},\,\,\,\quad\label{braid}\\
a^*(a b)=\|a\|^2b=(b a)a^*,\quad a^{-1}(a b)=b=(b a)a^{-1},\quad\,\,\textrm{(inverse)},\quad\label{Inv}\\
a(a b)=a^2 b,\quad (b a)a = b a^2,\quad a(b a) = (a b) a,\quad\quad\,\,\textrm{(alternative)},\,\label{Alter}\\
(a b)(c a)\!=\![a(b c)]a,\, a[b(a c)]\!=\![(a b) a]c,\, [(b a)c]a\!=\!b[a (ca)],\textrm{(Moufang)}\label{Moufang}.
\end{eqnarray}

Although octonions are non-commutative and non-associative, they are alternative in view of (\ref{Alter}) and the
subalgebra generated by any two non-simultaneously zero octonions is isomorphic either to
$\mathbb{R}$, $\mathbb{C}$ or $\mathbb{H}$, which are associative. In view of this and Hurwitz's Theorem, we can frequently restrict our considerations to octonions - the most general normed division algebra.

Using (\ref{Exchange}), (\ref{braid})
and since $g(e_i,e_j)=0$
for $1\leq i\neq j\leq 7$, we get that
\begin{eqnarray*}
g(e_i(e_j o),\hat o)=-g(e_j o,e_i \hat o)=-2g(e_j,e_i)g(o,\hat o)+g(e_j \hat o, e_i o)=g(-e_j(e_i o),\hat o),\\
g((oe_i)e_j,\hat o)=-g(o e_i,\hat oe_j)=-2g(e_i,e_j)g(o,\hat o)+g(oe_j, \hat o e_i)=g(-(oe_j)e_i,\hat o),
\end{eqnarray*}
for all $o,\hat o\in \mathbb{O}$. Hence,
\begin{equation}\label{anti}
e_i(e_j o)=-e_j(e_io),\quad (oe_i)e_j=-(oe_j)e_i,\qquad 1\leq i\neq j\leq 7.
\end{equation}

Every normed division algebra forms a normed vector space over the reals.
Thus, $A$ becomes a complete metric space relative to ${\rm d}(a_1,a_2):=\|a_1-a_2\|$, where
$a_1$, $a_2\in A$. Since normed division algebras are isomorphic to a certain $\mathbb{R}^s$ as vector spaces,  they can then be understood as differentiable manifolds.

The next trivial lemma will be useful in following sections.
\begin{lemma}\label{lemma1}
 Given smooth curves $a_{(1)}(t), a_{(2)}(t), a_{(3)}(t)$ in $A$ with $\|a_{(3)}(t)\| $ $\neq 0$ for every $t\in\mathbb{R}$,
we have
$$
\begin{gathered}
\frac{d}{dt}[a_{(1)}(t)a_{(2)}(t)]=\frac{da_{(1)}(t)}{dt}a_{(2)}(t)+a_{(1)}(t)\frac{da_{(2)}(t)}{dt},\qquad
\\
\frac{da_{(3)}^{-1}(t)}{dt}=-a_{(3)}^{-1}(t)\frac{da_{(3)}(t)}{dt}a_{(3)}^{-1}(t).
\end{gathered}
$$
Similarly,
$\nabla_u (F_1F_2)=(\nabla_u F_1)F_2+F_1(\nabla_uF_2)$,
where $F_1,F_2:A\rightarrow A$ and ($\nabla_uG)(a)$ is the directional derivative in the direction
$u\in A$ of the function $G:A\rightarrow A$ at $a\in A$.
\end{lemma}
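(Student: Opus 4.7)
The lemma has the flavour of an elementary calculus exercise, but the non-associativity of $\mathbb{O}$ requires care at one step. My plan is to establish the three identities in the order stated, reducing each to the $\mathbb{R}$-bilinearity of the product in $A$ and, where needed, to the alternative and inverse identities already recorded in (\ref{Inv}) and (\ref{Alter}).

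First I would derive the Leibniz rule for curves from the algebraic identity
\begin{equation*}
a_{(1)}(t+h)a_{(2)}(t+h)-a_{(1)}(t)a_{(2)}(t)=[a_{(1)}(t+h)-a_{(1)}(t)]\,a_{(2)}(t+h)+a_{(1)}(t)\,[a_{(2)}(t+h)-a_{(2)}(t)],
\end{equation*}
which is valid by $\mathbb{R}$-bilinearity of the product and does \emph{not} invoke associativity. Dividing by $h$ and letting $h\to 0$, and using that multiplication $A\times A\to A$ is jointly continuous (since $A$ is finite-dimensional), yields the first identity. The product rule for $\nabla_u(F_1F_2)$ is then immediate: applying the argument above pointwise along the affine curve $s\mapsto a+su$ in $A$ transforms directional derivatives into ordinary derivatives of $A$-valued curves, and the Leibniz rule just established finishes the job.

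For the inverse rule, I would differentiate the relation $a_{(3)}(t)\,a_{(3)}^{-1}(t)=1$ by means of the Leibniz rule and obtain
\begin{equation*}
\frac{da_{(3)}}{dt}\,a_{(3)}^{-1}+a_{(3)}\,\frac{da_{(3)}^{-1}}{dt}=0.
\end{equation*}
Left-multiplying by $a_{(3)}^{-1}$ and applying the inverse identity $a^{-1}(ab)=b$ from (\ref{Inv}) collapses the left-hand side to $da_{(3)}^{-1}/dt$, which gives $da_{(3)}^{-1}/dt=-a_{(3)}^{-1}\bigl(\dot a_{(3)}\,a_{(3)}^{-1}\bigr)$. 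The one genuinely non-trivial point is that the expression $a_{(3)}^{-1}\,\dot a_{(3)}\,a_{(3)}^{-1}$ as written in the statement is a priori ambiguous in a non-associative algebra; this ambiguity is removed by the third alternative identity in (\ref{Alter}), $a(ba)=(ab)a$, applied with $a=a_{(3)}^{-1}$ and $b=\dot a_{(3)}$, which shows that the two possible bracketings of the triple product coincide.

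The main (and essentially only) obstacle is therefore the failure of associativity in $\mathbb{O}$. Every remaining step is a direct transcription of the standard calculus argument to the setting of finite-dimensional normed division algebras, with the bilinearity and joint continuity of the product carrying the analytic content, and with (\ref{Inv})--(\ref{Alter}) handling each instance in which brackets must be regrouped.
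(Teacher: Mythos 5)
Your argument is correct, but there is nothing in the paper to compare it against: the authors label this statement ``the next trivial lemma'' and supply no proof at all. What you have written is the standard argument, and you handle the two points where non-associativity could cause trouble exactly as one should --- the telescoping identity for the Leibniz rule uses only $\mathbb{R}$-bilinearity and joint continuity of the product, the cancellation $a^{-1}(ab)=b$ is the inverse law (\ref{Inv}), and the well-definedness of the unbracketed triple product $a_{(3)}^{-1}\,\dot a_{(3)}\,a_{(3)}^{-1}$ is precisely the flexible identity $a(ba)=(ab)a$ from (\ref{Alter}). The only step you leave implicit is that $t\mapsto a_{(3)}^{-1}(t)$ is itself differentiable, which you need before you may apply your Leibniz rule to $a_{(3)}(t)a_{(3)}^{-1}(t)$; this follows at once from the paper's formula $a^{-1}=a^{*}/\|a\|^{2}$, since conjugation is linear and $\|a\|^{2}$ is a nonvanishing polynomial along the curve. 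With that one remark added, your proof is complete and is surely the argument the authors had in mind.
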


 We now give the general procedure of projective lines of normed division algebras. This structure will be used to relate NDA Riccati equations to mostly linear systems of differential equations (we refer to \cite{Ba02} for details).

Given $(a_1,a_2)\in A^2_\times:=A^2\backslash\{(0,0)\}$, we define the equivalence relation
\begin{equation}\label{rel1}
\begin{gathered}
(a_1,a_2)\sim(\tilde{a}_1,\tilde{a}_2),\,\,a_2\neq0 \Leftrightarrow\exists \lambda\in A\backslash\{0\}:(\tilde{a}_1,\tilde{a}_2)=((a_1a_2^{-1})\lambda,\lambda),\\
(a_1,a_2)\sim(\tilde{a}_1,\tilde{a}_2),\,\,a_1\neq 0\Leftrightarrow\exists \lambda\in A\backslash\{0\}:(\tilde{a}_1,\tilde{a}_2)=(\lambda,(a_2a_1^{-1})\lambda).
\end{gathered}
\end{equation}
The above equivalence relation is well defined. Indeed, for points $(a_1,a_2)$ with $a_1a_2\neq 0$ both possibilities in (\ref{rel1}) are equivalent due to the inverse laws.
The  quotient space, ${\rm AP}^1:=A^2_\times/\sim$, is the so-called {\it $A$-projective line} and $[a_1:a_2]$ stands for the equivalence class related to $(a_1,a_2)$. The ${\rm AP}^1$ is a differentiable
manifold with the differentiable structure induced by the atlas given through the charts $\psi_2: D_2\ni [a_1:1] \mapsto a_1\in A\simeq\mathbb{R}^s$ and $\psi_1: D_1\ni [1:a_2] \mapsto a_2\in A\simeq \mathbb{R}^s$, with $D_2:=\{[a_1:1]\in A{\rm P}^1:a_1\in A\}$ and $D_1:=\{[1:a_2]\in A{\rm P}^1:a_2\in A\}$. The
transition map between these charts, $\phi_A:\psi_1(D_1\cap D_2)\ni a\mapsto a^{-1}\in \psi_2(D_1\cap D_2)$,  is a diffeomorphism because $0\notin \psi_1(D_1\cap D_2)\cup \psi_2(D_1\cap D_2)$.

Let us also prove that the associated canonical projection map $\pi_A:A^2_\times\rightarrow A {\rm P}^1$ is smooth.  Consider the charts $\Psi_i:p\in U_i\subset A^2_\times\mapsto (a_1,a_2)\in A\times A\simeq \mathbb{R}^{2s}$, where $U_i:=\{(a_1,a_2)\in A^2:a_i\neq 0\}$ for $i=1,2$. Since $\pi_A(U_i)=D_i$, the local expressions for $\pi_A$  read
\begin{equation*}\label{map}
\psi_i\circ\pi_A\circ \Psi^{-1}_i:\Psi_i(U_i)\rightarrow \psi_i(D_i); \qquad \psi_i\circ\pi_A\circ \Psi^{-1}_i(a_1,a_2):=\begin{cases}a_2a_1^{-1},&i=1,\\a_1a_2^{-1},&i=2.\end{cases}\!\!
\end{equation*}
This mapping is well defined and smooth.

The previous construction can be simplified for associative  normed division algebras, e.g. quaternions. In this case, the projective line can be defined simply by considering ${A}^2_\times:={A}^2\backslash\{(0,0)\}$ and the equivalence relation
\begin{equation*}\label{rel}
(a_1,a_2)\sim (\tilde{a}_1,\tilde{a}_2)\iff\ \exists \lambda\in {A}\backslash\{0\}: (\tilde{a}_1,\tilde{a}_2)=(\lambda a_1,\lambda a_2).
\end{equation*}
\section{Conformal vector fields and transformations}
Let us state some facts on conformal geometry to be used hereafter.
\begin{definition} (See \cite{Sc08}) Let $(N,g)$ and $(N',g')$ be Riemannian manifolds and~${\rm dim}\,N={\rm dim}\,N'$. The mapping
$\varphi:U\subset N\rightarrow V\subset N'$ of maximal rank is a {\rm conformal transformation} if there exists a
function $f_\varphi:U\rightarrow \mathbb{R}_{>0}$,
the {\it conformal factor} of $\varphi$, satisfying
$
\varphi^{*}g'=f_\varphi^2g.
$
\end{definition}

\begin{definition}Let $(N,g)$ be a Riemannian manifold. A vector field $X$ on $N$ is {\rm conformal vector
field} relative to $g$ when
$
\mathcal{L}_X g=f_Xg
$
for a certain $f_X\in C^{\infty}(N)$ with $\mathcal{L}_X$ being the Lie derivative relative to $X$.
The function $f_X$ is called the {\rm potential} of $X$. A {\rm Killing vector field} is a conformal vector
field with zero potential.
\end{definition}

Conformal  vector fields on $(N,g)$ form a Lie algebra, its {\it conformal Lie algebra}, denoted by $\mathfrak{conf}(N,g)$ and conformal transformations
form a Lie group denoted by ${\rm Conf}(N,g)$.
In this work we are mainly interested in Euclidean spaces on $N=\mathbb{R}^n$, for $n=1$, $2$, $4$ and $8$. Hence, if a metric is given, we will assume it to be Euclidean.

The conformal Lie algebra in the plane is known to be infinite-dimensional \cite[p. 27]{Kl09}. In this case, we write $\mathfrak{conf}(\mathbb{R}^2)$ for the finite-dimensional Lie algebra of conformal vector fields associated to global (up to a unique singular point) orientation-preserving conformal transformations.
We have the Lie algebra isomorphisms (cf. \cite[p. 21]{Kl09})
$
\mathfrak{conf}(\mathbb{R}^n)\simeq \mathfrak{so}(n+1,1),
$
where $\mathfrak{so}(n+1,1)$ is the indefinite special orthogonal Lie algebra related to the metric on $\mathbb{R}^{n+2}$ with signature $(n+1,1)$. It is worth noting  that the $\mathfrak{conf}(\mathbb{R}^n)$ are maximal in the Lie algebra of polynomial vector fields on $\mathbb{R}^n$ and simple \cite{BLM18735}.

It is well known that global (up to a unique singular point) orientation-preserving  conformal transformations on the
complex plane are the referred to as {\it M\"obius transformations} \cite[p. 32]{Sc08}, namely
 $$
{\mathbb{C}}\ni z\mapsto \frac{\alpha z+\beta}{\gamma z+\delta}\in \mathbb{C},\qquad\qquad  \alpha\delta-\beta\gamma=1,\quad \alpha,\beta,\gamma,\delta\in \mathbb{C}.
 $$
M\"obius transformations form a group, called the {\it M\"obius
 group} ${\rm Mb}(\mathbb{C})$, which is isomorphic to $PSL(2,\mathbb{C}):=
 SL(2,\mathbb{C})/\mathbb{Z}_2$ \cite{PenRin86}.
 There is a link between conformal mappings,
 Lorentz transformations and spinor transformations for complex numbers \cite{PenRin86}. This was generalized by Manogue and
 Drey  to 
 normed division algebras \cite{ManDra98}. In particular, octonionic M\"obius transformations are
 $$
\mathbb{O}\ni o\mapsto (\alpha o+\beta)(\gamma o+\delta)^{-1}\in \mathbb{O},\qquad \alpha\delta-\beta\gamma=1,
 $$
 where $\alpha$, $\beta$, $\gamma$, $\delta$ lie in possibly different complex subspaces of $\mathbb{O}$
 \cite{ManDra98}. A similar result
 appears for quaternions (see \cite{GL12,PS09} and references therein).

 To summarise, we can understand normed division algebras as Riemannian manifolds with a Euclidean metric leading to the norm of the division algebra.
 The corresponding conformal vector fields span finite-dimensional Lie algebras $\mathfrak{conf}(\mathbb{R}^n)$
for $n=1$, $2$, $4$ and $8$. The flows of these vector fields are then conformal transformations acting  as M\"obius transformations.
\section{Fundamentals on Lie systems}
In this section the notion of Lie systems and its characterisation in terms of $t$-dependent vector fields and Lie algebras is detailed. All these concepts will be necessary to study geometrically NDA Riccati equations.

 \begin{definition}
A $t$-{\rm dependent vector field} on $N$ is a map
 $
 X: \mathbb{R}\times N\ni (t,x)\mapsto X_t(x):=X(t,x)\in TN ,
 $
 such that $\pi\circ X=p$ for $\pi: TN\rightarrow N$ and $p: \mathbb{R}\times N\ni (t,x)\mapsto x\in N$.
\end{definition}

So, a $t$-dependent vector field amounts to a family of vector fields $\{X_t\}_{t\in\mathbb{R}}$.
\begin{definition}
 An {\rm integral curve} of a $t$-dependent vector field $X$ on $N$ is an integral curve $\gamma : \mathbb{R}\rightarrow
 \mathbb{R}\times N$ of
 the vector field $\partial/\partial t + X(t, x)$ on $\mathbb{R}\times N$.
\end{definition}

Every $t$-dependent vector field admits a reparameterisation $\bar t=\bar t(t)$ such that
 $\gamma(\bar{t})=(\bar{t}, x(\bar{t}))$
 and the system describing its integral curves becomes
 \begin{equation}\label{tcurv}
 \frac{d(p\circ\gamma)}{d\bar{t}}(\bar{t})=(X\circ\gamma)(\bar{t}).
 \end{equation}

 Conversely, every system in normal form can be brought into the above form for a unique $t$-dependent vector field.
 This motivates to use $X$ to represent a $t$-dependent vector field and the system for its integral curves.

\begin{definition}
 Given a $t$-dependent vector field on $N$ of the form
$$
X(t,x)=\sum_{i=1}^nX_i(t,x)\frac{\partial}{\partial x_i},
$$
 its {\rm associated system} is the system determining its integral
curves, namely
$$
\frac{dx_i}{dt}=X_i(t,x),\qquad i=1,\ldots,n=\dim N.
$$
\end{definition}
\begin{example}A Riccati equation \!(\ref{RiccR})\! is related to\! the $t$-dependent vector field
\begin{equation}\label{RiccX}
X(t,x)=(b^{-}(t)+b^{(0)}(t)x+b^{+}(t)x^2)\frac{\partial}{\partial x},
\end{equation}
where we set  $b^{-}(t):=a(t)$, $b^{(0)}(t):=b(t)$ and $b^{+}(t):=c(t)$. This notation will be appropriate to relate NDA Riccati equations to
graded Lie algebras.
\end{example}

\begin{definition}
 The {\rm minimal Lie algebra} of a $t$-dependent vector field $X$ is the smallest real
Lie algebra of vector fields, $V^X$, containing $\{X_t\}_{t\in\mathbb{R}}$.
\end{definition}
\begin{example} The minimal Lie algebra of the $t$-dependent vector field (\ref{RiccX}) for $b^{-}(t)=b^{(0)}(t)=b^{+}(t)=a_0\in\mathbb{R}$ is $V^X=\langle a_0(1+x+x^2)\partial/\partial x\rangle$.
\end{example}

  \begin{definition}
  A {\rm superposition rule} depending on $m$ particular solutions for a system $X$ on $N$ is
a function
$
\Phi: N^m\times N\ni(x_{+} ,\ldots, x_{(m)},\lambda) \mapsto x\in N,
$
such that the
general solution of $X$ can be brought into the form $x(t)=\Phi(x_{+}(t),\ldots,x_{(m)}(t);\lambda)$,
where $x_{+}(t)$,$\ldots$, $x_{(m)}(t)$ is any generic family of particular solutions and $\lambda$ is a point
of $N$ to be related to initial conditions.
 \end{definition}
 \begin{example} In the case of Riccati equations, expression (\ref{sup}) shows that they admit a superposition rule
 $\Phi:\mathbb{R}^3\times\mathbb{R}\ni (x_{(1)},x_{(2)},x_{(3)},k)\mapsto x\in  \mathbb{R}$ of the form
 $$
x:=\frac{x_{(1)}(x_{(3)}-x_{(2)})+kx_{(2)}(x_{(3)}-x_{(1)})}{x_{(3)}-x_{(2)}+k(x_{(3)}-x_{(1)})}.
 $$
 Indeed, (\ref{sup}) implies that the general solution, $x(t)$, to any Riccati equation takes the form $x(t)=\Phi(x_{(1)}(t),x_{(2)}(t),x_{(3)}(t),k)$.
 \end{example}
\begin{definition}
 A nonautonomous system of first-order ordinary differential equations that admits a superposition rule is called a {\rm Lie system}.
\end{definition}

\begin{theorem}{\bf (Lie--Scheffers Theorem \cite{LS})}\label{Lie-Scheffers}
 A system $X$ admits a superposition rule if and only if $X =\sum_{\alpha=1}^rb_\alpha(t)X_\alpha$ for a certain family
 $b_1(t),\ldots, b_r(t)$ of
$t$-dependent real functions and vector fields $X_1,\ldots,X_r$ spanning an $r$-dimensional real Lie
algebra $V$.
\end{theorem}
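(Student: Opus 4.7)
The plan is to prove the two directions separately: for sufficiency, to use the diagonal prolongation of $X$ to a product manifold together with Frobenius' theorem; for necessity, to differentiate the superposition rule and exploit the resulting algebraic constraint on $\{X_t\}_{t\in\mathbb{R}}$.

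For the \emph{if} direction, suppose $X=\sum_{\alpha=1}^rb_\alpha(t)X_\alpha$ with $X_1,\ldots,X_r$ spanning a finite-dimensional real Lie algebra $V$. I would introduce the diagonal prolongation $\widetilde X_\alpha$ of each $X_\alpha$ to $N^{m+1}$, that is, the vector field whose component on the $i$-th factor is $X_\alpha$ evaluated at the $i$-th point; the assignment $X_\alpha\mapsto\widetilde X_\alpha$ is a Lie algebra homomorphism, so $\widetilde V:=\langle\widetilde X_1,\ldots,\widetilde X_r\rangle$ is a finite-dimensional Lie algebra of dimension at most $r$. Choosing $m$ large enough (for instance $m\dim N\geq r$), the generalised distribution $\mathcal D$ spanned pointwise by $\widetilde V$ has rank strictly less than $(m+1)\dim N$ on an open subset of $N^{m+1}$. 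Since $\widetilde V$ is a Lie algebra, $\mathcal D$ is involutive and the Stefan--Sussmann theorem produces smooth common first integrals $I_1,\ldots,I_k$ of all $\widetilde X_\alpha$, hence of the prolongation $\widetilde X=\sum_\alpha b_\alpha(t)\widetilde X_\alpha$. Solving $I_j(x,x_{(1)},\ldots,x_{(m)})=\lambda_j$ locally for $x$ yields a function $\Phi$, and constancy of the $I_j$ along integral curves of $\widetilde X$ forces $\Phi(x_{(1)}(t),\ldots,x_{(m)}(t),\lambda)$ to equal the general solution $x(t)$ of $X$, giving the required superposition rule.

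For the \emph{only if} direction, assume $X$ admits a superposition rule $\Phi$. Differentiating the identity $x(t)=\Phi(x_{(1)}(t),\ldots,x_{(m)}(t),\lambda)$ with respect to $t$ and using $\dot x(t)=X_t(x)$ and $\dot x_{(i)}(t)=X_t(x_{(i)})$, I obtain
\[
X_t(x)=\sum_{i=1}^m\bigl(D_i\Phi\bigr)_{(x_{(1)},\ldots,x_{(m)},\lambda)}\bigl(X_t(x_{(i)})\bigr),
\]
which expresses $X_t$ at the arbitrary point $x=\Phi(x_{(1)},\ldots,x_{(m)},\lambda)$ as a linear combination of the values of $X_t$ at the $m$ fixed auxiliary points. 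Fixing the $x_{(i)}$ and varying $\lambda$ locally sweeps an open neighbourhood of $x$, so $X_t$ lies in a real subspace of $\mathfrak{X}(N)$ of dimension at most $m\dim N$; a further argument based on iterated differentiation of $\Phi$ then shows that this subspace is closed under the Lie bracket, and hence coincides with the finite-dimensional Lie algebra $V=V^X$ required.

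The main obstacle will be the \emph{only if} direction: the sufficiency proof is essentially Frobenius once the prolongation is set up, but extracting bracket closure from the mere existence of $\Phi$ is much more delicate and is the heart of the classical Lie--Scheffers argument. The subtlety is that the identity above only gives a finite-dimensional linear span for $\{X_t\}_{t\in\mathbb{R}}$, and one must analyse carefully how the Jacobians of $\Phi$ interact with the Lie bracket of two such $X_t$. This is where I would expect to spend most of the effort, and it is the reason the theorem was historically stated by Lie and Scheffers under an analyticity hypothesis.
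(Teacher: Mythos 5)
The paper does not prove this statement: it quotes the Lie--Scheffers Theorem from the literature (\cite{LS}, and in its modern geometric form \cite{CGM07,Dissertationes}), so there is no internal proof to compare against. Your sketch follows the standard modern proof via diagonal prolongations, which is indeed the right framework, but two steps need repair before it would close.

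In the sufficiency direction, the condition $m\dim N\geq r$ guarantees that the involutive distribution $\mathcal D$ spanned by $\widetilde X_1,\ldots,\widetilde X_r$ on $N^{m+1}$ has corank at least $\dim N$, hence enough first integrals exist; but it does \emph{not} by itself guarantee that you can solve $I_j=\lambda_j$ for the first copy $x$. For that you need the differentials $dI_1,\ldots,dI_{\dim N}$ to remain independent when restricted to the $x$-directions, which holds precisely when the prolongations of $X_1,\ldots,X_r$ to the last $m$ factors $N^m$ are already linearly independent at a generic point. That this happens for $m$ large enough is a separate (nontrivial) lemma in the Cari\~nena--Grabowski--Marmo treatment, and your choice ``$m\dim N\geq r$'' is necessary but not sufficient for it. In the necessity direction, the identity $X_t(x)=\sum_{i=1}^m(D_i\Phi)(X_t(x_{(i)}))$ is the right starting point, but your conclusion that ``this subspace is closed under the Lie bracket'' is misplaced: the linear span of $\{X_t\}_{t\in\mathbb{R}}$ need not be bracket-closed. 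The correct object is the space $W$ of \emph{all} vector fields $Y$ on $N$ whose diagonal prolongations are tangent to the foliation of $N^{m+1}$ by the graphs $x=\Phi(x_{(1)},\ldots,x_{(m)};\lambda)$. Tangency to a foliation is preserved by Lie brackets and prolongation commutes with brackets, so $W$ is a Lie algebra; your own identity shows any $Y\in W$ is determined near $x$ by its values at the $m$ base points, so $\dim W\leq m\dim N$; and $V^X\subseteq W$ since every $X_t\in W$. No ``iterated differentiation of $\Phi$'' is needed once $W$ is introduced. With these two repairs your outline becomes the standard proof.
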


The Lie algebra $V$ is called a {\it Vessiot--Guldberg Lie algebra} of $X$. The Lie--Scheffers Theorem can be
rewritten by saying that $X$ admits a superposition rule if and only if $V^X$ is finite-dimensional \cite{Dissertationes}.
\begin{example} We already know that Riccati equations are Lie systems, i.e. they have a superposition rule. Let us verify that they satisfy the conditions detailed by the Lie--Scheffers Theorem. The $t$-dependent vector field associated with Riccati equations takes the form
$$
X(t,x)=b^{-}(t)X^{-}+b^{(0)}(t)X^{(0)}+b^{+}(t)X^+,
$$
where $X^{-}:=\partial/\partial x$, $X^{(0)}:=x\partial/\partial x$ and $X^{+}:=x^2\partial/\partial x$ span a real Lie algebra $V_{\rm Ric}$ of vector fields isomorphic to $\mathfrak{sl}(2,\mathbb{R})$. Indeed,
$$
[X^{-},X^{(0)}]=X^{-},\qquad [X^{-},X^+]=2X^{(0)},\qquad [X^{(0)},X^+]=X^+.
$$
Note that $V_{\rm Ric}=\oplus_{k\in \mathbb{Z}}V^{(k)}$, for $V^{(-1)}:=\langle X^-\rangle,V^{(0)}:=\langle X^{(0)}\rangle,V^{(1)}:=\langle X^{+}\rangle$ and $V^{(k)}=0$ for $k\in \mathbb{Z}\backslash \{-1,0,1\}$. Hence, $[V^{(k_1)},V^{(k_2)}]\subset V^{(k_1+k_2)}$ for any $k_1,k_2\in \mathbb{Z}$, i.e. $V_{\rm Ric}$ is a $\mathbb{Z}$-graded Lie algebra. The notation of each $t$-dependent function details the gradding of the vector field to which is related.
\end{example}
\section{On the definition of octonionic Riccati equations}
Inspired by the form of the Riccati equations over $\mathbb{R}$, $\mathbb{C}$ and $\mathbb{H}$, we propose a Riccati equation over the octonions given by
\begin{equation}\label{OctRicc}
\frac{do}{dt}=b^{-}(t)+b^{0_L}(t)o+ob^{0_R}(t)+ob^{+}(t)o, \qquad o\in \mathbb{O},
\end{equation}
where $b^{-}(t),b^{0_L}(t),b^{0_R}(t),b^{+}(t)$ are arbitrary $\mathbb{O}$-valued $t$-dependent coefficients. The meaning of the notation employed for such functions will become fully clear posteriorly.
Since octonions are
alternative, the term $ob^{+}(t)o$ is well defined. Although this section is focused on (\ref{OctRicc}), the hereafter called {\it octonionic Riccati equations},
results can be easily extrapolated to any other {\it NDA Riccati equation}, namely a Riccati equation like (\ref{OctRicc}) where $o$ is assumed to belong to any normed division algebra.

The left-hand side of (\ref{OctRicc}) at $t=t_0$ can be understood as the limit $\lim_{t\rightarrow t_0}[o(t)-o(t_0)]/(t-t_0)$ in $\mathbb{O}$ relative to the topology induced by the norm of $\mathbb{O}$. Meanwhile, the right-hand side of (\ref{OctRicc}) is a mere element of $\mathbb{O}$ for each value of $t$.
Geometrically, the left-hand side of (\ref{OctRicc}) gives rise to an element of
$T_o\mathbb{O}$, i.e. the tangent vectors to a curve $\gamma:\mathbb{R}\ni t\mapsto o(t)\in \mathbb{O}$ passing through $o$ at $t$ and the right-hand side part can be understood for each value of $t$ as an element of $T_o\mathbb{O}$ by means of the mapping
$\lambda_o: \mathbb{O}\ni \tilde o\mapsto \lambda_o(\tilde o)\in T_o\mathbb{O}$ of the form
\begin{equation}\label{lift}
[\lambda_o(\tilde o)]f:=\frac{d}{d\tau}\bigg|_{\tau=0}f(o+\tau \tilde o),\qquad \forall f \in C^\infty(\mathbb{O}).
\end{equation}
This mapping is a linear isomorphism over $\mathbb{R}$ \cite{Is99}. Note also that (\ref{lift}) can immediately be extended to any NDA.

Above viewpoints allow us to investigate the properties of octonionic Riccati equations. One of their advantages is that they allow us to simplify calculations by using the algebraic structure of octonions. For instance, the following lemma will enable us to easily obtain Lie brackets of vector fields on $\mathbb{O}$. It can also be easily extended to any other normed division algebra.
\begin{lemma}\label{lem31} Given $F^{1)},F^{2)}: \mathbb{O}\rightarrow \mathbb{O}$, the vector fields on $\mathbb{O}$
defined by $Y_k(o):=\lambda_o[F^{k)}(o)]$, with $o\in\mathbb{O}$ and $k=1,2$, satisfy
\begin{equation}\label{grad}
[Y_1, Y_2](o)=\lambda_o\left\{[\nabla_{F^{1)}(o)}F^{2)}](o)-[\nabla_{F^{2)}](o)}F^{1)}](o)\right\},\qquad \forall o\in\mathbb{O}.
\end{equation}
\end{lemma}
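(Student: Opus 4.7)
The plan is to reduce the identity \eqref{grad} to the textbook coordinate formula for the Lie bracket of two smooth vector fields on $\mathbb{R}^{8}$, exploiting that $\mathbb{O}$ is a real vector space with the fixed basis $\{e_\alpha\}_{\alpha=0,\ldots,7}$ and hence diffeomorphic to $\mathbb{R}^{8}$ with global coordinates $(o_{0},\ldots,o_{7})$. The algebra structure of $\mathbb{O}$ only enters through the way one parses the symbols $F^{k)}(o)$ and $\nabla_{u}F$; non-commutativity and non-associativity will play no role.

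First I would decompose $F^{k)}(o)=\sum_{\alpha=0}^{7}F^{k)}_{\alpha}(o)\,e_{\alpha}$ with smooth real components $F^{k)}_{\alpha}\in C^{\infty}(\mathbb{O})$. Since $\lambda_{o}$ is $\mathbb{R}$-linear, definition \eqref{lift} gives
\[
Y_{k}(o)=\lambda_{o}[F^{k)}(o)]=\sum_{\alpha=0}^{7}F^{k)}_{\alpha}(o)\,\frac{\partial}{\partial o_{\alpha}}\bigg|_{o},
\]
so $Y_{1},Y_{2}$ are ordinary smooth vector fields on $\mathbb{R}^{8}$ with coefficient functions $F^{k)}_{\alpha}$.

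Next I would apply the standard coordinate expression for the Lie bracket of two vector fields on $\mathbb{R}^{8}$:
\[
[Y_{1},Y_{2}](o)=\sum_{\alpha,\beta}\left(F^{1)}_{\beta}(o)\frac{\partial F^{2)}_{\alpha}}{\partial o_{\beta}}(o)-F^{2)}_{\beta}(o)\frac{\partial F^{1)}_{\alpha}}{\partial o_{\beta}}(o)\right)\frac{\partial}{\partial o_{\alpha}}\bigg|_{o}.
\]
The key recognition is that by the very definition of the directional derivative used in Lemma~2.5, $(\nabla_{u}G)(o)=\sum_{\beta}u_{\beta}\,\partial G/\partial o_{\beta}(o)$ for every $G:\mathbb{O}\to\mathbb{O}$ and $u=\sum_{\beta}u_{\beta}e_{\beta}\in\mathbb{O}$. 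Hence the $\alpha$-component of $(\nabla_{F^{1)}(o)}F^{2)})(o)$ equals exactly $\sum_{\beta}F^{1)}_{\beta}(o)\,\partial F^{2)}_{\alpha}/\partial o_{\beta}(o)$, and similarly for the second term with indices interchanged. Reassembling the two sums by means of $\lambda_{o}$ yields precisely \eqref{grad}.

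I do not anticipate any genuine obstacle: the statement is essentially the coordinate form of the Lie bracket dressed in algebra-valued notation. The only point requiring a modicum of care is fixing the $\mathbb{R}$-basis once and using the identification $\mathbb{O}\simeq\mathbb{R}^{8}$ consistently; since the resulting identity relates the intrinsic objects $[Y_{1},Y_{2}]$ and $\nabla_{u}F$, no basis dependence survives in the conclusion. The same argument transfers verbatim to $\mathbb{R},\mathbb{C}$ and $\mathbb{H}$, which is consistent with the remark preceding the lemma that the construction extends to any finite-dimensional normed division algebra.
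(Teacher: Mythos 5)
Your proposal is correct and follows essentially the same route as the paper: both reduce the statement to the coordinate formula for the Lie bracket on $\mathbb{R}^8$ by writing $Y_k=\sum_\alpha F^{k)}_\alpha\,\partial/\partial o_\alpha$ and then identifying the components of $\nabla_{F^{1)}(o)}F^{2)}-\nabla_{F^{2)}(o)}F^{1)}$ with the resulting coefficients. The only cosmetic difference is that the paper derives the bracket formula explicitly by acting on a test function $f$ and cancelling the second-order terms, whereas you invoke it as the standard textbook identity.
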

\begin{proof}We assume that $F^{i)}(o)=(F_0^{i)}(o),\ldots,F_7^{i)}(o))$ for $i=1,2$. 
From the definitions of $\lambda_o$, $Y_1$ and $Y_2$, we get that $[Y_1,Y_2]f$ equals to
\[
\!\!\!\sum_{i,j=0}^7\!\left[F^{1)}_i\frac{\partial}{\partial o_i}\!\left(\!F^{2)}_j\frac{\partial f}{\partial o_j}\!\right)\!-\!F^{2)}_j
\frac{\partial}{\partial o_j}\left(\!F^{1)}_i\frac{\partial f}{\partial o_i}\right)\!\right]\!\! = \!\!\sum_{i,j=0}^7\!\left[
F^{1)}_i\frac{\partial F^{2)}_j}{\partial o_i}\frac{\partial f}{\partial o_j}\!-\!
F^{2)}_j\frac{\partial F^{1)}_i}{\partial o_j}\frac{\partial f}{\partial o_i}\right]
\]
for all $f\in C^{\infty}(\mathbb{O})$. Relabelling the summation indices, we obtain
\[
\begin{aligned}
([Y_1,Y_2]f)(o)&=\sum_{i,j=0}^7\left(F^{1)}_i\frac{\partial F^{2)}_j}{\partial o_i}\!-\!F^{2)}_i\frac{\partial F^{1)}_j}{\partial o_i}\right)
\!\frac{\partial f}{\partial o^j}(o)\\&=\lambda_o\{[\nabla_{F^{1)}(o)}F^{2)}-\nabla_{F^{2)}(o)}F^{1)}](o)\}f.
\end{aligned}\]
\end{proof}
\begin{lemma}
Octonionic Riccati equations are the associated systems with the $t$-dependent vector fields
\begin{equation}\label{newform}
\begin{gathered}
\!\!\!X\!\!=\!\!\sum_{i=0}^7(b^{-}_i(t)X_i^{-}\!+\!b^{+}_i(t)X_i^{+})\!+\!
b^{(0)}(t)X^{(0)}\!+\!\sum_{j=1}^7(b^{0_L}_j(t)X_j^{0_L}\!+\!b_j^{0_R}(t)X_j^{0_R}),
\end{gathered}
\end{equation}
with $b^{-}_i(t),b^{+}_i(t),b^{0_L}_j(t),b^{0_R}_j(t)$ being $t$-dependent real functions, $b^{(0)}(t)$  $ :=  b_0^{0_L}(t)+b_0^{0_R}(t)$, and
\begin{equation}\label{Field}
\begin{gathered}
X_i^{-}(o):=\lambda_o(e_i),\; X_i^{+}(o):=\lambda_o(o e_i o),\; X^{(0)}(o):=\lambda_o(o),\; \\
X_j^{0_L}(o):=\lambda_o(e_jo),\; X_j^{0_R}(o):=\lambda_o(oe_j),
\end{gathered}
\end{equation}
for $i=0,\ldots, 7$, $j=1,\ldots, 7$, the basis $\{e_0,\ldots,e_7\}$ of $\mathbb{O}$ and every $o\in \mathbb{O}$.
\end{lemma}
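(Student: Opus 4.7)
The plan is straightforward decomposition: expand each $\mathbb{O}$-valued coefficient in the basis $\{e_0,\ldots,e_7\}$, use $\mathbb{R}$-bilinearity of the octonionic product to pull the real components out, then translate the resulting expression in $\mathbb{O}$ into a tangent vector through the isomorphism $\lambda_o$ introduced in (\ref{lift}).

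Concretely, I would write
$$
b^{-}(t)=\sum_{i=0}^{7}b^{-}_i(t)e_i,\quad b^{+}(t)=\sum_{i=0}^{7}b^{+}_i(t)e_i,\quad b^{0_L}(t)=\sum_{j=0}^{7}b^{0_L}_j(t)e_j,\quad b^{0_R}(t)=\sum_{j=0}^{7}b^{0_R}_j(t)e_j,
$$
with $b^{-}_i,b^{+}_i,b^{0_L}_j,b^{0_R}_j\in C^{\infty}(\mathbb{R},\mathbb{R})$. Substituting into (\ref{OctRicc}) and using that multiplication in $\mathbb{O}$ is $\mathbb{R}$-bilinear, the right-hand side becomes
$$
\sum_{i=0}^{7}b^{-}_i(t)e_i+\sum_{j=0}^{7}b^{0_L}_j(t)(e_jo)+\sum_{j=0}^{7}b^{0_R}_j(t)(oe_j)+\sum_{i=0}^{7}b^{+}_i(t)(oe_io),
$$
where the last sum is well-defined because $\mathbb{O}$ is alternative (so the expression $oe_io$ needs no parenthesisation, see (\ref{Alter})). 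Next I would treat the $j=0$ contributions of the two middle sums: since $e_0=1$, both $e_0o$ and $oe_0$ equal $o$, hence they combine into a single term $(b^{0_L}_0(t)+b^{0_R}_0(t))o=b^{(0)}(t)o$, leaving the remaining two sums ranging only over $j=1,\ldots,7$.

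Finally, applying the $\mathbb{R}$-linear isomorphism $\lambda_o$ of (\ref{lift}) to pass from elements of $\mathbb{O}$ to elements of $T_o\mathbb{O}$, and recalling the definitions in (\ref{Field}), the right-hand side is identified with
$$
\sum_{i=0}^{7}b^{-}_i(t)X^{-}_i(o)+\sum_{i=0}^{7}b^{+}_i(t)X^{+}_i(o)+b^{(0)}(t)X^{(0)}(o)+\sum_{j=1}^{7}\bigl(b^{0_L}_j(t)X^{0_L}_j(o)+b^{0_R}_j(t)X^{0_R}_j(o)\bigr),
$$
which is exactly (\ref{newform}). The $t$-dependent vector field whose associated system coincides with (\ref{OctRicc}) is therefore of the claimed form. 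The only ``obstacle'' worth flagging is conceptual rather than computational: one has to check that the quadratic term $ob^{+}(t)o$ is unambiguous, for which alternativity of $\mathbb{O}$ is essential; once this is in place the argument is a direct bookkeeping of components.
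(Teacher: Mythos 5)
Your proposal is correct and follows essentially the same route as the paper's own proof: expand the octonionic coefficients in the basis $\{e_0,\ldots,e_7\}$, use $\mathbb{R}$-bilinearity of the product and $\mathbb{R}$-linearity of $\lambda_o$, merge the $j=0$ contributions of the left and right linear terms into $b^{(0)}(t)X^{(0)}$, and identify the remaining pieces with the vector fields in (\ref{Field}). The remark on alternativity guaranteeing that $ob^{+}(t)o$ is unambiguous is a point the paper makes just before the lemma rather than inside its proof, but it is the same observation.
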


\begin{proof}
Writing the $\mathbb{O}$-valued functions of (\ref{OctRicc}) in coordinates, we obtain that (\ref{OctRicc}) amounts geometrically to
$$
\frac{do}{dt}=\lambda_o\left[\sum_{j=0}^7 (b^{-}_j(t)e_j+b^{0_L}_j(t)e_jo+b_j^{0_R}(t)oe_j+b_j^{+}(t)oe_jo)\right],
$$
where $b^{-}_j(t),b^{0_L}_j(t),$ $b_j^{0_R}(t),b_j^{+}(t)$ are $t$-dependent real functions and so they commute with all the elements of $\mathbb{O}$. Using the linearity of $\lambda_o$ over $\mathbb{R}$, we get
\begin{multline*}
\frac{do}{dt}=\sum_{j=0}^7[b^{-}_j(t)\lambda_o(e_j)+b^{+}_j(t)\lambda_o(oe_jo)]\\
+[b^{0_L}_0(t)+b^{0_R}_0(t)]\lambda_o(o)+
\sum_{j=1}^7[b^{0_L}_j(t)\lambda_o(e_jo)+b^{0_R}_j(t)\lambda_o(oe_j)].
\end{multline*}
Substituting (\ref{Field}) in the latter and using the definition of $b^{(0)}(t)$, we notice that the above system is the associated system of  (\ref{newform}).
\end{proof}
\section{NDA Riccati equations and Lie systems}

Let us prove that a NDA Riccati equation over a normed division algebra $A$ can be considered as a particular type of conformal Riccati equation with a Vessiot--Guldberg Lie algebra isomorphic to $\mathfrak{conf}(A)\simeq \mathfrak{so}(\dim A+1,1)$. 

Conformal Riccati equations are Lie systems on a metric space  $(\mathbb{R}^{p+q},$ $ \langle\cdot,
\cdot\rangle)$ of signature $(p,q)$ that posses a Vessiot--Guldberg Lie algebra of conformal vector fields isomorphic to $\mathfrak{so}(p+1,q+1)$
(see \cite{AW} for details). More specifically, conformal Riccati equations take the form
\begin{equation}\label{ConfRicc}
\frac{d\xi}{dt}=a(t)+\lambda(t) \xi+\Omega(t)\xi+c(t)\langle \xi,\xi\rangle -2\langle c(t),\xi\rangle
\xi,\qquad \xi \in \mathbb{R}^{p+q},
\end{equation}
where $a(t),c(t)\in \mathbb{R}^{p+q}$, $\lambda(t)\in\mathbb{R}$ and $\Omega(t)$ is a $(p+q)\times (p+q)$ real matrix
such that $\langle\Omega(t)\xi_1,\xi_2\rangle=-\langle \xi_1,\Omega(t)\xi_2\rangle$, for every $\xi_1,\xi_2\in\mathbb{R}^{p+q}$
and $t\in\mathbb{R}$ .

\begin{theorem} Every NDA Riccati equation
\begin{equation}\label{NDARicc}
\frac{da}{dt}=b^{-}(t)+b^{0_L}(t)a+ab^{0_R}(t)+ab^{+}(t)a,\qquad a \in
A,
\end{equation}
can be written in coordinates of a basis of $A$ in the form (\ref{ConfRicc}), with
$\lambda(t)=b^{0_L}_0(t)+b^{0_R}_0(t)$, the vectors $a(t), c(t)$ being the vector coordinate expressions of the elements $b^{-}(t), -[b^{+}(t)]^*$ respectively, and $\Omega(t)$ being the matrix related to the $\mathbb{R}$-linear operator $T_t:A\ni a\mapsto \overline{b^{0_L}(t)}a+a\overline{b^{0_R}(t)}\in A$.
\end{theorem}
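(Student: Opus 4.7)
The strategy is to rewrite each of the four terms on the right-hand side of (\ref{NDARicc}), read in a fixed real basis of $A$ so that $A \simeq \mathbb{R}^s$ with $s=\dim A$ and $\langle\cdot,\cdot\rangle = g$, as a corresponding term of the conformal Riccati equation (\ref{ConfRicc}). The constant term $b^{-}(t)$ immediately plays the role of the vector $a(t)$ in (\ref{ConfRicc}), so the real work lies in the linear and quadratic pieces.

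\textbf{Linear terms.} First I would decompose $b^{0_L}(t) = b^{0_L}_0(t) + \overline{b^{0_L}(t)}$ and $b^{0_R}(t) = b^{0_R}_0(t) + \overline{b^{0_R}(t)}$ into scalar and vectorial parts. The scalar parts are real and therefore commute with $a$, producing $\bigl(b^{0_L}_0(t)+b^{0_R}_0(t)\bigr)a=\lambda(t)a$. The vectorial parts form the $\mathbb{R}$-linear map $T_t(a)=\overline{b^{0_L}(t)}a+a\overline{b^{0_R}(t)}$, whose matrix in the chosen basis is $\Omega(t)$. The only nontrivial point is $g$-antisymmetry of $\Omega(t)$. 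It follows from the braid law (\ref{braid}) together with $\overline{b^{0_X}(t)}^{\,*}=-\overline{b^{0_X}(t)}$: indeed
\begin{equation*}
g\bigl(\overline{b^{0_L}(t)}a_1,a_2\bigr)=g\bigl(a_1,\overline{b^{0_L}(t)}^{\,*}a_2\bigr)=-g\bigl(a_1,\overline{b^{0_L}(t)}a_2\bigr),
\end{equation*}
and analogously on the right, so $g(T_t(a_1),a_2)=-g(a_1,T_t(a_2))$.

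\textbf{Quadratic term.} Next, and this is the main step, I would establish the algebraic identity
\begin{equation*}
aba \;=\; 2\,g(a,b^*)\,a - \|a\|^2 b^*,\qquad a,b\in A,
\end{equation*}
which holds in every normed division algebra (the product $aba$ is unambiguous by the alternative law (\ref{Alter})). Setting $b=b^{+}(t)$ and $c(t):=-[b^{+}(t)]^{*}$, this rewrites $ab^{+}(t)a = c(t)\|a\|^2-2g(c(t),a)\,a$, exactly the quadratic part of (\ref{ConfRicc}).

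To prove the identity, I would invoke the non-degeneracy criterion (\ref{Inner}) and test against an arbitrary $c\in A$: the target
\begin{equation*}
g(aba,c)=2\,g(a,b^*)\,g(a,c)-\|a\|^2\,g(b^*,c)
\end{equation*}
follows from two applications of the exchange law (\ref{Exchange}) (first to $g((ab)a,\,c\cdot 1)$, then to the resulting $g(ab,ca)$), combined with the minimal polynomial relation $a^2=2g(a,1)a-\|a\|^2$ (itself immediate from $aa^*=\|a\|^2$ and $a+a^*=2g(a,1)$) and the symmetry $g(x,y^*)=g(x^*,y)$. The various cross-terms telescope into the right-hand side. Assembling the three parts then gives (\ref{NDARicc}) in the form (\ref{ConfRicc}) with precisely the $a(t),\lambda(t),\Omega(t),c(t)$ announced. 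The principal obstacle is the quadratic identity above: it must be derived without associativity, which is why one works with the scalar-valued pairing $g(aba,c)$ rather than manipulating $aba$ directly.
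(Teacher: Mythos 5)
Your proposal is correct and follows essentially the same route as the paper: the antisymmetry of $\Omega(t)$ via the braid law and $\overline{b^{0_X}(t)}^{\,*}=-\overline{b^{0_X}(t)}$, and the quadratic identity $ab^{+}(t)a = 2g([b^{+}(t)]^*,a)\,a-[b^{+}(t)]^*g(a,a)$ proved by pairing with an arbitrary element and invoking non-degeneracy (\ref{Inner}). The only cosmetic difference is the order of the composition-algebra laws in that pairing computation — the paper first applies braid to get $g(ab^{+}(t),\hat a a^*)$ and then exchange, using $aa^*=\|a\|^2$ directly, whereas you open with exchange and the minimal polynomial of $a$ — but both chains telescope to the same result.
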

\begin{proof} Let us take any elements $a,\hat a\in A$.  We have that
$$
\begin{aligned}
g(ab^{+}(t)a,\hat a)&=g(ab^{+}(t),\hat aa^*)=2g(a,\hat a)g(b^{+}(t),a^*)-
g( aa^*,\hat ab^{+}(t))\\
&=2g( a,\hat a)g( b^{+}(t),a^*)-g( a,a)g(1,\hat ab^{+}(t))\\
&=2g( a,\hat a)g([b^{+}(t)]^*,a)-g( a,a)g([b^{+}(t)]^*,\hat a)\\
&=g( 2g([b^{+}(t)]^*,a) a-[b^{+}(t)]^*g( a,a),\hat a),
\end{aligned}
$$
where we used relations (\ref{Scale}),(\ref{Exchange}),(\ref{braid}) and (\ref{Inv}). Since the above is true for
arbitrary $\hat a\in A$, we get from (\ref{Inner}) that
$$
ab^{+}(t)a=2g([b^{+}(t)]^*,a) a-[b^{+}(t)]^*g( a,a).
$$
Let us prove that $g(T_ta,\hat a)=-g(a,T_t\hat a)$ for every $t\in\mathbb{R}$. Indeed,
$$
\begin{gathered}
g(T_ta,\hat a)=g(\overline{b^{0_L}(t)}a+a\overline{b^{0_R}(t)},\hat a)=g(a,\overline{b^{0_L}(t)}^*\hat a+\hat a\overline{b^{0_R}(t)}^*)=-g(a,T_t\hat a).
\end{gathered}
$$
Using the above, equation (\ref{NDARicc}) can be rewritten as
$$
\frac{da}{dt}=b^{-}(t)+[b^{0_L}_0(t)+b^{0_R}_0(t)]a+T_ta
-[b^{+}(t)]^*g(a,a)+2g([b^{+}(t)]^*,a)a.
$$
Choose a  basis for $A$ and define $\xi$ to be the coordinate expression for $a$. Hence, $T_ta$ can be written in coordinates as $\Omega(t)\xi$, where $\Omega(t)$ is the matrix related to $T_t$ in the chosen basis. Fixing $\langle \cdot,\cdot\rangle$ in (\ref{ConfRicc}) to be $g(\cdot,\cdot)$, which implies $q=0$ and $p=\dim A$ because $g(\cdot,\cdot)$ is Euclidean, we obtain $0=\langle \Omega(t)\xi_1,\xi_2\rangle+\langle \xi_1,\Omega(t)\xi_2\rangle$ and (\ref{ConfRicc}) becomes the coordinate expression of (\ref{NDARicc}).

\end{proof}

\begin{corollary}
Every Riccati equation over a normed division algebra $A$ admits a Vessiot--Guldberg Lie algebra isomorphic to $\mathfrak{so}(\dim A+1,1)$.
\end{corollary}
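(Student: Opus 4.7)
The plan is to deduce this immediately from the preceding theorem together with the general facts about conformal Riccati equations recalled at the beginning of Section~6. By the theorem, any NDA Riccati equation~(\ref{NDARicc}) can be rewritten, after choosing a basis of $A$ and identifying $A$ with $\mathbb{R}^{\dim A}$, in the form~(\ref{ConfRicc}) of a conformal Riccati equation with $\langle\cdot,\cdot\rangle=g(\cdot,\cdot)$. Since $g$ is Euclidean, the corresponding signature is $(p,q)=(\dim A,0)$.

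I would then invoke the known result (cited to \cite{AW}) that a conformal Riccati equation on $(\mathbb{R}^{p+q},\langle\cdot,\cdot\rangle)$ of signature $(p,q)$ possesses a Vessiot--Guldberg Lie algebra of conformal vector fields isomorphic to $\mathfrak{so}(p+1,q+1)$. Specialising to $p=\dim A$, $q=0$ gives a Vessiot--Guldberg Lie algebra isomorphic to $\mathfrak{so}(\dim A+1,1)$, which is exactly the statement. As a consistency check, this also matches the isomorphism $\mathfrak{conf}(\mathbb{R}^n)\simeq\mathfrak{so}(n+1,1)$ recalled in Section~3 and reproduces the entries of Table~\ref{tableVG}.

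There is no real obstacle here beyond what has already been done in the theorem; the corollary is just the combination of the conformal-Riccati identification with the standard description of the conformal algebra of Euclidean space. The only point that requires a small remark is that the four basic building blocks of~(\ref{ConfRicc})---the constant, linear scaling, skew-symmetric rotational, and inversion-type quadratic pieces---do span the full $\mathfrak{conf}(\mathbb{R}^{\dim A})$, so that the Vessiot--Guldberg Lie algebra attached to a generic NDA Riccati equation is the whole $\mathfrak{so}(\dim A+1,1)$ and not a proper subalgebra; this is automatic from the genericity of the coefficients $b^{-}(t),b^{0_L}(t),b^{0_R}(t),b^{+}(t)$.
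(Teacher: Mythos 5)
Your argument is exactly the one the paper intends: the corollary follows immediately from the preceding theorem together with the fact, recalled at the start of Section~6, that a conformal Riccati equation on a space of signature $(p,q)$ has a Vessiot--Guldberg Lie algebra of conformal vector fields isomorphic to $\mathfrak{so}(p+1,q+1)$, specialised to $(p,q)=(\dim A,0)$. Your closing remark about the generic coefficients spanning the full conformal algebra addresses minimality rather than the mere existence asserted by the corollary (and is in fact subtler than ``automatic'', since the rotational part of an NDA Riccati equation only generates all of $\mathfrak{so}(\dim A)$ after taking Lie brackets, as the paper shows in Lemma~7.1), but it is not needed for the statement as given.
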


\section{Other definitions of octonionic Riccati equations}

We now address the relevant question of the specific form of the octonionic Riccati equation (\ref{OctRicc}). More specifically, we establish that the addition of other quadratic terms, e.g. $e(t)o^2$ or $o^2f(t)$ for generic $t$-dependent functions $e(t)$ and $f(t)$, is not desirable since the $t$-dependent vector field associated with (\ref{OctRicc}) will no longer take values in a finite-dimensional real Lie algebra, i.e. the new defined octonionic Riccati equations will not be Lie systems.

Throughout this section  $\mathbb{O}$ is understood as an eight-dimensional manifold with a Euclidean metric given by
$g_\mathbb{O}=\sum_{i=0}^7do_i\otimes do_i.
$

\begin{lemma}\label{LemCon} The vector fields
\begin{equation}\label{VoVF}
X^{0_L}_1,\ldots, X^{0_L}_7,X_1^{0_R},\ldots,X^{0_R}_7,
\end{equation}
and their successive Lie brackets span a Lie algebra $V^{(0)}_\circ\simeq\mathfrak{so}(8)$ of Killing vector fields with respect to $g_\mathbb{O}$.
\end{lemma}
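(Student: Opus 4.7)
The plan is to interpret the generators $X_j^{0_L}$ and $X_j^{0_R}$ as linear vector fields on $\mathbb{O}\simeq\mathbb{R}^8$ associated with antisymmetric $\mathbb{R}$-linear operators, and then identify the Lie algebra they span via successive brackets with the full algebra $\mathfrak{so}(8)$ of antisymmetric operators for $g_\mathbb{O}$. The key ingredients are the braid identities~(\ref{braid}), the antisymmetry property~(\ref{anti}), and the bracket formula from Lemma~\ref{lem31}.

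For the Killing property, I would observe that for any $\mathbb{R}$-linear operator $M\colon \mathbb{O}\to \mathbb{O}$, the vector field $X_M(o):=\lambda_o(M(o))$ has coordinate components linear in $(o_0,\ldots,o_7)$, so the Killing condition for $g_\mathbb{O}$ is equivalent to $g(Mo,\hat o)+g(o,M\hat o)=0$ for all $o,\hat o\in \mathbb{O}$. Applying (\ref{braid}) together with $e_j^*=-e_j$ gives
\[
g(e_j o,\hat o)=g(o,e_j^*\hat o)=-g(o,e_j\hat o),\qquad g(o e_j,\hat o)=g(o,\hat o e_j^*)=-g(o,\hat o e_j),
\]
so $L_{e_j}$ and $R_{e_j}$ are antisymmetric and hence $X_j^{0_L}, X_j^{0_R}$ are Killing.

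I would next compute brackets using Lemma~\ref{lem31}. Since the directional derivative of a linear map $F(o)=Mo$ in the direction $u$ is simply $Mu$, combining with (\ref{anti}) yields, for $i\neq j$,
\[
[X_i^{0_L},X_j^{0_L}](o)=\lambda_o\bigl(e_j(e_io)-e_i(e_jo)\bigr)=-2\,\lambda_o\bigl(e_i(e_jo)\bigr),
\]
and analogous formulas for $[X_i^{0_R},X_j^{0_R}]$ and $[X_i^{0_L},X_j^{0_R}]$. Hence $M\mapsto X_M$ intertwines the Lie bracket of such vector fields with the operator commutator up to a sign, so the Lie algebra $V_\circ^{(0)}$ generated by $\{X_j^{0_L},X_j^{0_R}\}$ is isomorphic to the Lie subalgebra $\mathfrak{L}\subseteq \mathfrak{gl}(\mathbb{O})$ generated under commutator by $\{L_{e_j},R_{e_j}\}$; since each generator is antisymmetric and the commutator preserves antisymmetry, $\mathfrak{L}\subseteq \mathfrak{so}(8)$.

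The main obstacle is the reverse inclusion, which I would handle by dimension counting. The identities $L_{e_i}^2=-I$ and $L_{e_i}L_{e_j}=-L_{e_j}L_{e_i}$ for $i\neq j$ (the first from $e_i^2=-1$ and the second from (\ref{anti})) show that $e_i\mapsto L_{e_i}$ extends to a representation of the Clifford algebra $\mathrm{Cl}(0,7)\cong M_8(\mathbb{R})\oplus M_8(\mathbb{R})$; its irreducible action on $\mathbb{O}\simeq\mathbb{R}^8$ exhibits the image as an entire $M_8(\mathbb{R})$ factor, so the degree-one and degree-two monomials $L_{e_i}$ ($1\le i\le 7$) and $L_{e_i}L_{e_j}$ ($1\le i<j\le 7$) descend to $28$ linearly independent operators on $\mathbb{O}$. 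Each $L_{e_i}L_{e_j}=\tfrac{1}{2}[L_{e_i},L_{e_j}]$ belongs to $\mathfrak{L}$ and is antisymmetric, so these $28$ elements form a basis of $\mathfrak{so}(8)$ contained in $\mathfrak{L}$. Therefore $\mathfrak{L}=\mathfrak{so}(8)$ and, consequently, $V_\circ^{(0)}\simeq \mathfrak{so}(8)$ as Lie algebras, establishing the claim.
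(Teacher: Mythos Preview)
Your argument is correct but follows a genuinely different route from the paper. For the Killing property, the paper first proves that $\|o\|$ is a common first integral of the $X_j^{0_L},X_j^{0_R}$ by a direct octonionic computation, then integrates to a connected Lie group acting by isometries and concludes $V_\circ^{(0)}\subseteq\mathfrak{so}(8)$; you instead observe directly from the braid laws that $L_{e_j}$ and $R_{e_j}$ are $g_\mathbb{O}$-antisymmetric, which is shorter and avoids the group-action detour. For the reverse inclusion, the paper exhibits $28$ linearly independent linear vector fields $\widetilde X_{ij}$ by reading off their explicit coordinate forms from Table~\ref{table1}; you replace this computation by the Clifford-algebra argument, noting that the relations $L_{e_i}^2=-I$ and $L_{e_i}L_{e_j}=-L_{e_j}L_{e_i}$ make $\mathbb{O}$ an $8$-dimensional (hence irreducible) $\mathrm{Cl}(0,7)$-module, so that the $7+21$ operators $L_{e_i}$ and $L_{e_i}L_{e_j}=\tfrac12[L_{e_i},L_{e_j}]$ are linearly independent antisymmetric matrices filling out $\mathfrak{so}(8)$. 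Your approach is more conceptual and actually proves the slightly stronger fact that the left generators alone already span $\mathfrak{so}(8)$; the paper's approach is more elementary and self-contained, since it does not rely on the identification $\mathrm{Cl}(0,7)\cong M_8(\mathbb{R})\oplus M_8(\mathbb{R})$.
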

\begin{proof} First, let us prove that the norm on $\mathbb{O}$, i.e. $\|o\|=\sqrt{o^*o}$, is a first-integral of
the vector fields $X^{0_L}_i$ for $i=1,\ldots,7$.
We have
\begin{equation}\label{eqs1}
X^{0_L}_i\|o\|^2=2\|o\|X^{0_L}_i\|o\|,\qquad i=1,\ldots,7.
\end{equation}
Since $\|o\|^2=(o_0-\bar o)(o_0+\bar o)$, we also obtain
$$
X^{0_L}_i\|o\|^2=X^{0_L}_i\left[{(o_0-\bar o)(o_0+\bar o)}\right]=X^{0_L}_i[{o}^2_0- \bar o^2],\quad i=1,\ldots,7.
$$
Using that $X^{0_L}_io_0=(e_io )_0$, $\nabla_{X
^{0_L}_i}\bar o=\overline{e_i o }$ and the scaling law (\ref{Scale}), we get for $i=1,\ldots, 7$,
$$
X^{0_L}_i\|o\|^2=2o_0(e_i o)_0-\bar {o}\, \overline{e_i o}-\overline{e_i o}\bar o=2g(o,e_io)=2\|o\|^2g(e_0,e_i)=0.
$$
In view of (\ref{eqs1}), we get that $f(o)=\|o\|$ is a first-integral for the $X^{0_L}_i$ with $i=1,\ldots,7$ at every $o\neq 0$. Since $X^{0_L}(0)=0$, then $X^{0_L}f=0$ on the whole $ \mathbb{O}$\footnote{The function $f(o)=\|o\|$ is not differentiable at $o=0$, but it admits directional derivatives at this point in terms of the tangent vectors $X_i^{0_L}(0)$.}. The proof for the vector fields $X_i^{0_R}$
is
analogous. Since every vector field of $V^{(0)}_\circ$ is generated by the vector fields (\ref{VoVF}) and their
successive Lie brackets, the norm becomes a first-integral for all the elements of $V^{(0)}_\circ$.

In view of Table \ref{table1}, the vector fields (\ref{VoVF}) are linear in the chosen coordi-nates and therefore their successive
Lie brackets are also. The space of linear vector fields on $\mathbb{O}$ has finite-dimension, then $V^{(0)}_\circ$ must
be finite-dimensional. In turn, we can define a linear Lie group action $\Phi:G\times \mathbb{O}\rightarrow \mathbb{O}$
whose $G$ is a connected Lie group such that $T_eG\simeq V^{(0)}_\circ$. Since the fundamental vector fields leave the norm
invariant, we obtain that the norm is invariant under the action by the linear isomorphisms $\Phi_{g}:\mathbb{O}\ni o
\mapsto \Phi(g,o)\in \mathbb{O}$, for $g\in G$. Let us denote by $\Phi_{g*}$ the tangent map to $\Phi_{g}$.
Therefore,
\begin{multline}\label{metr}
g_{\mathbb{O}}(\Phi_{g*}v_1,\Phi_{g_*}v_2)=\frac 12[\|\Phi_{g*}(v_1+v_2)\|^2-\|\Phi_{g*}v_1\|^2-\|\Phi_{g*}v_2\|^2]\\
=\frac 12[\|(v_1+v_2)\|^2\!-\!\|v_1\|^2\!-\!\|v_2\|^2]=g_{\mathbb{O}}(v_1,v_2), \;  \forall v_1,v_2\in T_o\mathbb{O},\forall o\in\mathbb{O}.
\end{multline}
Hence, the elements of $\Phi_g$ are isometries of  $g_{\mathbb{O}}$ and they belong to $O(8)$. Since $G$ is connected
and the determinant of these mappings is a continuous function, it follows that $\Phi_e={\rm Id}$ implies that $\Phi_g\in
SO(8)$ for every $g\in G$ and the fundamental vector fields of $\Phi$ must be a Lie subalgebra of $\mathfrak{so}(8)$.

Let us prove that the fundamental vector fields of $\Phi$ generate the whole $\mathfrak{so}(8)$. Notice that for $i,j=1,\ldots,7$ we obtain
\begin{equation}\label{lleft}
\begin{array}{ll}
[X^{0_L}_i,X^{0_L}_j]&=\lambda_o[\nabla_{e_io}(e_jo)-\nabla_{e_jo}e_io]
=
\left\{\begin{array}{cc}-2\lambda_o[e_i(e_j o)],&i\neq j,\\0,&i=j,\end{array}\right.\end{array}
\end{equation}
where the first case on the right-hand expressions follows from (\ref{anti}). Thus, we generate a new family of 21
vector fields
$X^{(0)}_{ij}(o):=\lambda_o[e_i(e_jo)],$ $0<i<j\leq 7.
$
 \,\, From Table \ref{table1}, it follows that the vector fields
$\widetilde{X}_{0i}:= X^{0_L}_i+X^{0_R}_i$, with $i=1,\ldots, 7$,
and $\widetilde{X}_{ij}:=X_{ij}^{(0)}+X^{0_R}_{j\cdot i}$, for $1\leq i<j\leq 7$ and where we define $X^{0_R}_{j\cdot i}:=
\lambda_o[o(e_je_i)]$, are linearly independent and
generate a Lie algebra of vector fields of dimension 28 within $\mathfrak{so}(8)$. Since $\dim\mathfrak{so}(8)=8\cdot
7/2=28$, they must generate the whole Lie algebra. As $SO(8)$ is a compact Lie group, the exponential is surjective and
$G=SO(8)$. Since all vector fields of $V_\circ^{(0)}$ are fundamental vector fields of the action of $SO(8)$ and in view of (\ref{metr}), they are Killing vector fields with respect to $g_\mathbb{O}$.
\end{proof}

\begin{table}[h] {\footnotesize
 \noindent
\label{table1}
\caption{Linear vector fields appearing in octonionic Riccati equations}
\begin{center}
\begin{tabular}{| p{1cm} |  p{10.5cm}|}
\hline
 &\\[-1.9ex]
$X^{(0)}$&$o_0\frac{\partial}{\partial o_0}+o_1\frac{\partial}{\partial o_1}+o_2\frac{\partial}{\partial o_{2}}+o_3\frac{\partial}{\partial o_{3}}+o_4\frac{\partial}{\partial o_{4}}+o_5\frac{\partial}{\partial o_{5}}+o_{6}\frac{\partial}{\partial o_{6}}+o_7\frac{\partial}{\partial o_{7}}$ \\[+1.0ex]
$X_1^{0_R}$&$-o_1\frac{\partial}{\partial o_0}+o_0\frac{\partial}{\partial o_1}+o_3\frac{\partial}{\partial o_{2}}-o_2\frac{\partial}{\partial o_{3}}+o_5\frac{\partial}{\partial o_{4}}-o_4\frac{\partial}{\partial o_{5}}-o_7\frac{\partial}{\partial o_{6}}+o_{6}\frac{\partial}{\partial o_{7}}$ \\[+1.0ex]
$X_2^{0_R}$&$-o_2\frac{\partial}{\partial o_0}-o_3\frac{\partial}{\partial o_1}+o_0\frac{\partial}{\partial o_{2}}+o_1\frac{\partial}{\partial o_{3}}+o_{6}\frac{\partial}{\partial o_{4}}+o_7\frac{\partial}{\partial o_{5}}-o_4\frac{\partial}{\partial o_{6}}-o_5\frac{\partial}{\partial o_{7}}$ \\[+1.0ex]
$X_3^{0_R}$&$-o_3\frac{\partial}{\partial o_0}+o_2\frac{\partial}{\partial o_1}-o_1\frac{\partial}{\partial o_{2}}+o_0\frac{\partial}{\partial o_{3}}+o_7\frac{\partial}{\partial o_{4}}-o_{6}\frac{\partial}{\partial o_{5}}+o_5\frac{\partial}{\partial o_{6}}-o_4\frac{\partial}{\partial o_{7}}$ \\[+1.0ex]
$X_4^{0_R}$&$-o_4\frac{\partial}{\partial o_0}-o_5\frac{\partial}{\partial o_1}-o_{6}\frac{\partial}{\partial o_{2}}-o_7\frac{\partial}{\partial o_{3}}+o_0\frac{\partial}{\partial o_{4}}+o_1\frac{\partial}{\partial o_{5}}+o_2\frac{\partial}{\partial o_{6}}+o_3\frac{\partial}{\partial o_{7}}$ \\[+1.0ex]
$X_5^{0_R}$&$-o_5\frac{\partial}{\partial o_0}+o_4\frac{\partial}{\partial o_1}-o_7\frac{\partial}{\partial o_{2}}+o_{6}\frac{\partial}{\partial o_{3}}-o_1\frac{\partial}{\partial o_{4}}+o_0\frac{\partial}{\partial o_{5}}-o_3\frac{\partial}{\partial o_{6}}+o_2\frac{\partial}{\partial o_{7}}$ \\[+1.0ex]
$X_6^{0_R}$&$-o_{6}\frac{\partial}{\partial o_0}+o_7\frac{\partial}{\partial o_1}+o_4\frac{\partial}{\partial o_{2}}-o_5\frac{\partial}{\partial o_{3}}-o_2\frac{\partial}{\partial o_{4}}+o_3\frac{\partial}{\partial o_{5}}+o_0\frac{\partial}{\partial o_{6}}-o_1\frac{\partial}{\partial o_{7}}$ \\[+1.0ex]
$X_7^{0_R}$&$-o_7\frac{\partial}{\partial o_0}-o_{6}\frac{\partial}{\partial o_1}+o_5\frac{\partial}{\partial o_{2}}+o_4\frac{\partial}{\partial o_{3}}-o_3\frac{\partial}{\partial o_{4}}-o_2\frac{\partial}{\partial o_{5}}+o_1\frac{\partial}{\partial o_{6}}+o_0\frac{\partial}{\partial o_{7}}$ \\[+1.0ex]
$X_1^{0_L}$&$-o_1\frac{\partial}{\partial o_0}+o_0\frac{\partial}{\partial o_1}-o_3\frac{\partial}{\partial o_{2}}+o_2\frac{\partial}{\partial o_{3}}-o_5\frac{\partial}{\partial o_{4}}+o_4\frac{\partial}{\partial o_{5}}+o_7\frac{\partial}{\partial o_{6}}-o_{6}\frac{\partial}{\partial o_{7}}$ \\[+1.0ex]
$X_2^{0_L}$&$-o_2\frac{\partial}{\partial o_0}+o_3\frac{\partial}{\partial o_1}+o_0\frac{\partial}{\partial o_{2}}-o_1\frac{\partial}{\partial o_{3}}-o_{6}\frac{\partial}{\partial o_{4}}-o_7\frac{\partial}{\partial o_{5}}+o_4\frac{\partial}{\partial o_{6}}+o_5\frac{\partial}{\partial o_{7}}$ \\[+1.0ex]
$X_3^{0_L}$&$-o_3\frac{\partial}{\partial o_0}-o_2\frac{\partial}{\partial o_1}+o_1\frac{\partial}{\partial o_{2}}+o_0\frac{\partial}{\partial o_{3}}-o_7\frac{\partial}{\partial o_{4}}+o_{6}\frac{\partial}{\partial o_{5}}-o_5\frac{\partial}{\partial o_{6}}+o_4\frac{\partial}{\partial o_{7}}$ \\[+1.0ex]
$X_4^{0_L}$&$-o_4\frac{\partial}{\partial o_0}+o_5\frac{\partial}{\partial o_1}+o_{6}\frac{\partial}{\partial o_{2}}+o_7\frac{\partial}{\partial o_{3}}+o_0\frac{\partial}{\partial o_{4}}-o_1\frac{\partial}{\partial o_{5}}-o_2\frac{\partial}{\partial o_{6}}-o_3\frac{\partial}{\partial o_{7}}$ \\[+1.0ex]
$X_5^{0_L}$&$-o_5\frac{\partial}{\partial o_0}-o_4\frac{\partial}{\partial o_1}+o_7\frac{\partial}{\partial o_{2}}-o_{6}\frac{\partial}{\partial o_{3}}+o_1\frac{\partial}{\partial o_{4}}+o_0\frac{\partial}{\partial o_{5}}+o_3\frac{\partial}{\partial o_{6}}-o_2\frac{\partial}{\partial o_{7}}$ \\[+1.0ex]
$X_6^{0_L}$&$-o_{6}\frac{\partial}{\partial o_0}-o_7\frac{\partial}{\partial o_1}-o_4\frac{\partial}{\partial o_{2}}+o_5\frac{\partial}{\partial o_{3}}+o_2\frac{\partial}{\partial o_{4}}-o_3\frac{\partial}{\partial o_{5}}+o_0\frac{\partial}{\partial o_{6}}+o_1\frac{\partial}{\partial o_{7}}$ \\[+1.0ex]
$X_7^{0_L}$&$-o_7\frac{\partial}{\partial o_0}+o_{6}\frac{\partial}{\partial o_1}-o_5\frac{\partial}{\partial o_{2}}-o_4\frac{\partial}{\partial o_{3}}+o_3\frac{\partial}{\partial o_{4}}+o_2\frac{\partial}{\partial o_{5}}-o_1\frac{\partial}{\partial o_{6}}+o_0\frac{\partial}{\partial o_{7}}$ \\[+1.0ex]
\hline
\end{tabular}
\medskip

\begin{tabular}{|    c|}
\hline\\[-0.3cm]
Additional vector fields $(0\leq i<j\leq 7)$ \\[+1.0ex]
\hline\\[-1.5ex]
 $\widetilde{X}_{ij}:=[X_i^{-},X_j^{+}]=X^{(0)}_{ij}+X^{0_R}_{j\cdot i}=2 o_i\frac{\partial}{\partial o_j}-2o_j\frac{\partial}{\partial o_i}$\\[+1.0ex]
 $X^{(0)}_{ij}(o):=\lambda_o(e_i(e_jo)),\qquad i,j=0,\ldots,7.$\\[+1.0ex]
 \hline
\end{tabular}

\noindent
\end{center}
\hfill}
\end{table}

\begin{proposition}
A differential equation of the form
\begin{equation}\label{ExtRicc}
\frac{do}{dt}=b^{-}(t)+b^{0_L}(t)o+ob^{0_R}(t)+e(t)o^2,
\end{equation}
for arbitrary  $t$-dependent $\mathbb{O}$-valued functions $b^{-}(t), b^{0_L}(t),b^{0_R}(t), e(t)$, is not a Lie system.
\end{proposition}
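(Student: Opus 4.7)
The plan is to show that the minimal Lie algebra $V^X$ of the $t$-dependent vector field $X$ associated with \eqref{ExtRicc} is infinite-dimensional; by the Lie--Scheffers theorem (Theorem~\ref{Lie-Scheffers}) this forces \eqref{ExtRicc} not to be a Lie system. Since the coefficient functions of \eqref{ExtRicc} are arbitrary, $V^X$ contains in particular the Euler vector field $X^{(0)}(o)=\lambda_o(o)$ and the quadratic vector fields
\[
Y_i(o) := \lambda_o(e_i o^2),\qquad i=0,1,\ldots,7,
\]
produced by the new term $e(t)o^2$ when $e(t)$ is taken to be a real multiple of a basis element $e_i$.

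The key computation is the iterated adjoint action of $Y_0$ on $Y_i$ for $i\ge 1$. Applying Lemma~\ref{lem31} with $F^{1)}(o)=o^2$ and $F^{2)}(o)=e_i o^2$, and performing all manipulations inside the associative subalgebra $\mathbb{R}\langle e_i,o\rangle$ (associative by Artin's theorem, since $\mathbb{O}$ is alternative), a direct calculation produces
\[
[Y_0,Y_i] = \lambda_o\bigl((e_i o - o e_i)\,o^2\bigr),
\]
a nonzero cubic polynomial vector field whenever $i\ge 1$, because $e_io-oe_i$ is a nontrivial linear function of $o$. An induction on $k\ge 2$ using the same Lemma~\ref{lem31} calculation within $\mathbb{R}\langle e_i,o\rangle$ then yields
\[
W_i^{(k)} := (\mathrm{ad}_{Y_0})^{k-1} Y_i = (k-1)!\,\lambda_o\bigl((e_i o - o e_i)\,o^k\bigr),
\]
a nonzero homogeneous polynomial vector field of degree $k+1$, since the recursion step in the induction works out to $[Y_0,W_i^{(k)}]=k\,\lambda_o([e_i,o]\,o^{k+1})$ after cancellations between $(k{+}1)e_io^{k+2}$ and the boundary terms of $\nabla_{[e_i,o]o^k}(o^2)$.

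Linear independence of the family $\{W_i^{(k)}\}_{k\ge 2}$ comes for free from the Euler vector field: $\mathrm{ad}_{X^{(0)}}$ acts on a homogeneous polynomial vector field of degree $d$ as multiplication by $d-1$, so the $W_i^{(k)}$ lie in pairwise distinct eigenspaces of $\mathrm{ad}_{X^{(0)}}\subset\mathrm{End}(V^X)$. Therefore $V^X$ contains infinitely many linearly independent vector fields, it is infinite-dimensional, and Theorem~\ref{Lie-Scheffers} concludes the argument. The main obstacle is the bookkeeping of non-associative and non-commutative products while iterating the brackets; the crucial simplification is that each individual bracket only involves the two octonions $e_i$ and $o$, so Artin's theorem reduces every step to associative manipulations in $\mathbb{R}\langle e_i,o\rangle$, and the eigenvalue argument under $\mathrm{ad}_{X^{(0)}}$ avoids any direct comparison of explicit polynomial expressions of differing degree.
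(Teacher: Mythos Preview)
Your argument is correct and takes a genuinely different route from the paper's. The paper proceeds via structure theory: it first invokes Lemma~\ref{LemCon} to identify the linear part $V^{(0)}_\circ\simeq\mathfrak{so}(8)$, then uses the (cited) maximality of $\mathfrak{so}(8)$ inside $\mathfrak{sl}(8,\mathbb{R})$ together with brackets $[X^{-}_\alpha,X^{(1,L)}_i]$ to force the generated Lie algebra to contain \emph{all} linear vector fields, and only then produces the degree-raising cascade by explicit coordinate brackets of the form $[o_\beta\partial_{o_\beta},[o_\alpha\partial_{o_\alpha},X^{(1,L)}_\alpha]]$. Your approach bypasses the maximality input entirely: you work only with the quadratic fields $Y_0=\lambda_o(o^2)$ and $Y_i=\lambda_o(e_io^2)$, observe that every bracket you need involves only the two generators $e_i$ and $o$ so that Artin's theorem places the whole calculation inside an associative subalgebra, and then iterate $\mathrm{ad}_{Y_0}$ to obtain $W_i^{(k)}=(k-1)!\,\lambda_o([e_i,o]\,o^k)$ directly. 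The homogeneity/eigenvalue argument under $\mathrm{ad}_{X^{(0)}}$ is a clean replacement for comparing explicit polynomials. What the paper's route buys is more structural information (it actually shows that the full $\mathfrak{gl}(8,\mathbb{R})$ of linear fields, and hence a large family of monomial fields $o_j^k\partial_{o_i}$, sit inside $V^X$); what your route buys is a shorter, self-contained proof that avoids external classification results and exploits the alternative law in a way closer to the spirit of the rest of the paper.
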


\begin{proof}
 Consider the linear space $V$ of linear vector fields on the coordinates $o_0,\ldots, o_7$. Each $X\in V$ act on $E:=\langle o_0,\ldots,o_7\rangle$ as an $\mathbb{R}$-linear   operator with matrix $M_X$ in the basis $\{o_0,\ldots, o_7\}$. This gives rise to  a Lie algebra isomorphism $\Xi:V\ni X\mapsto  -M_X\in {\rm Mat}(\mathbb{R}^8)$ (cf. \cite{ACL10}). If $M_X$ is traceless, we will say that $X$ is traceless.
 In view of the coordinate expressions given in Table \ref{table1}, the  $\Xi(X_\alpha^{0_L})$, $\Xi(X_\alpha^{0_R})$ are antisymmetric matrices for $\alpha=1,\ldots,7$. As $\Xi$ is a Lie algebra isomorphism and the $X_\alpha^{0_L}, X_\alpha^{0_R}$ and their successive Lie brackets span $V_\circ^{(0)}\simeq \mathfrak{so}(8)$ in view of Lemma \ref{LemCon}, then the space $\Xi(V_\circ ^{(0)})$ consists of antisymmetric matrices spanning a subspace isomorphic to $\mathfrak{so}(8)$. This Lie algebra is a maximal Lie subalgebra of $\mathfrak{sl}(8,\mathbb{R})$. The addition of any other element of $\mathfrak{sl}(8,\mathbb{R})$ to $\mathfrak{so}(8)$ causes that the new subset can only be contained in the Lie algebra $\mathfrak{sl}(8,\mathbb{R})$ (cf. \cite{Ch93,ORW86,SW84}). Similarly, since $\Xi$ is a Lie algebra isomorphism, the addition of any new traceless vector field to $V_\circ^{(0)}$ causes that the resulting space can only be contained in the Lie algebra of traceless linear vector fields.

The Lie brackets of the vector fields $\{X^{-}_\alpha\}_{\alpha=0,\ldots,7}$ and $\langle X_i^{(1,L)}\rangle_{i=0,\ldots,7}$ generate linear traceless vector fields associated to not antisymmetric matrices. So, the vector fields associated to (\ref{ExtRicc}) generate all linear traceless vector fields.
Additionally, since $X^{(0)}$ is not a traceless vector field, the vector fields related to (\ref{ExtRicc}) generate all linear vector fields.
A short calculation shows that
$$
\left[o_\beta\frac{\partial}{\partial o_\beta},\left[o_\alpha\frac{\partial}{\partial o_\alpha},X_\alpha^{(1,L)}\right]\right]=(-1)^{0^\beta}2o_\beta^2\frac{\partial}{\partial o_\alpha},\qquad \alpha\neq \beta,\quad \alpha,\beta=0,\ldots,7,
$$
which allow us to generate the vector fields
$
X^{(i,j)}:=o^2_j\frac{\partial}{\partial o_i}$, $i\neq j$ from the vector fields in (\ref{ExtRicc}). The latter vector fields generate vector fields with coefficients which are polynomial with an arbitrary degree.  Indeed,
$$
\left[o^2_\beta\frac{\partial}{\partial o_\delta},\left[o_\delta\frac{\partial}{\partial o_\beta},o_\beta^k\frac{\partial}{\partial o_\alpha}\right]\right]=ko_\beta^{k+1}\frac{\partial}{\partial o_\alpha},\,\, \alpha\neq \beta, \beta\neq \delta, \alpha\neq \delta,\,\, \alpha,\beta=0,\ldots,7,
$$
for $k=2,3,\ldots$
\end{proof}

The addition of a term $o^2f(t)$ with generic $t$-dependent $\mathbb{O}$-valued  function $f(t)$ leads to a similar result by using identical methods. It is straightforward to particularise the above proposition for quaternionic Riccati equations.

Let us finally remark that octonionic Riccati equations do not cover all conformal Riccati equations related to Riemannian metrics on $\mathbb{R}^8$. Octonionic Riccati equations demand only 14 linear traceless vector fields to be described, namely $X^{0_L}_\alpha$ and $X^{0_R}_\alpha$ with $\alpha=1,\ldots,7$. This implies that its definition only contains 14 $t$-dependent coefficients. Meanwhile, $\Omega(t)$ appearing in conformal Riccati equations has 28 independent $t$-dependent entries. Thus, octonionic Riccati equations give rise to a subcase of conformal Riccati equations. Nevertheless, Lemma \ref{LemCon}  states that the traceless vector fields in the octonionic Riccati equation generate a Lie algebra isomorphic to $\mathfrak{so}(8)$, which has dimension 28. Hence, octonionic Riccati equations  and conformal Riccati equations over $\mathbb{R}^8$ share the same Vessiot--Guldberg Lie algebra for generic coefficients.
\section{Linear systems and extensions for NDA Riccati equations}
It is well known that Riccati equations can be extended to the real projective
line $\mathbb{R}{\rm P}^1\simeq S^1$. Moreover, particular solutions to Riccati equations can be obtained from particular solutions to associated linear systems \cite{In44}. The same results cannot be so easily
generalised to quaternions and octonions because they are non-associative and non-commutative, respectively. In this section, we will review the geometric approach to Riccati equations and we will extend such an approach to general normed division algebras.

Let us survey above mentioned known results on Riccati equations over the reals. Consider the system 
on $\mathbb{R}^2_\times:=\mathbb{R}^2\backslash\{(0,0)\}$ of the form
\begin{equation}\label{linric}
\left\{\begin{aligned}
\frac{dx_1}{dt}&=a_{11}(t)x_1+a_{12}(t)x_2,\\
\frac{dx_2}{dt}&=-a_{21}(t)x_1-a_{22}(t)x_2,
\end{aligned}\right.
\end{equation}
where $(x_1,x_2)\in\mathbb{R}^2_\times$ and $a_{11}(t), a_{12}(t), a_{21}(t), a_{22}(t)$ are arbitrary $t$-dependent real functions.

We can retrieve the Riccati equation from the above system in the following way. Define the surjective mapping $\pi_2:(x_1,x_2)\in U_2\subset \mathbb{R}^2\mapsto x_1/x_2\in \mathbb{R}$, with $U_{2}:=\{(x_1,x_2)\in\mathbb{R}^2:x_2\neq 0\}$. If $(x_1(t),x_2(t))$ is a particular solution to (\ref{linric}) within $U_{2}$, then $x(t):=\pi_2(x_1(t),x_2(t))$ satisfies
$$
\begin{aligned}
\frac{dx}{dt}&=\frac{dx_1}{dt}\frac{1}{x_2}-\frac{x_1}{x_2^2}\frac{dx_2}{dt}
\\&=(a_{11}(t)x_1+a_{12}(t)x_2)\frac{1}{x_2}+\frac{x_1}{x_2^2}(a_{21}(t)x_1+a_{22}(t)x_2)
\\
&=a_{12}(t)+(a_{11}(t)+a_{22}(t))x+a_{21}(t)x^2.
\end{aligned}
$$
So every particular solution to system (\ref{linric}) originates a particular solution to the above Riccati equation. Since $\pi_2$ is surjective, every solution to a Riccati equation can be obtained from a particular solution of (\ref{linric}) with $x_2(t)\neq 0$.

Let us describe the above results as a particular case of a more general geometric procedure describing the linearisation of quaternionic Riccati equations and their extension to a projective quaternionic line. Next, the procedure will be generalized to octonions in a nontrivial way.

The generalization of system (\ref{linric}) to the quaternions looks like
\begin{equation}\label{linricqua}
\left\{\begin{aligned}
\frac{dq_1}{dt}&=a_{11}(t)q_1+a_{12}(t)q_2,\\
\frac{dq_2}{dt}&=-a_{21}(t)q_1-a_{22}(t)q_2,
\end{aligned}\right.
\end{equation}
for $(q_1,q_2)\in \mathbb{H}^2_\times:=\mathbb{H}^2\backslash\{(0,0)\}$ and the $t$-dependent  functions $a_{11}(t),a_{12}(t),$ $a_{21}(t),a_{22}(t)$ taking values in $\mathbb{H}$.
This system is associated with the $t$-dependent vector field on $\mathbb{H}_\times^2$ of the form
$$
X_{\mathbb{H}}(t,q_1,q_2)=\lambda_{(q_1,q_2)}[a_{11}(t)q_1+a_{12}(t)q_2,-a_{21}(t)q_1-a_{22}(t)q_2].
$$
As in the real case, if $(q_1(t),q_2(t))$ is a particular solution for the quaternionic system (\ref{linricqua}) for $q_2(t)\neq 0$, then $q(t):=\pi_\mathbb{H}(q_1(t),q_2(t))$, with $\pi_{\mathbb{H}}:\mathbb{H}^2_\times\rightarrow \mathbb{H}{\rm P}^1$ satisfies the quaternionic Riccati equation (\ref{RiccH}). Indeed, using the coordinate $\psi_2:[q_1:q_2]\in D_2\mapsto w_2:= q_1q_2^{-1}\in \mathbb{R}^4$ we can write $q(t)=q_1(t)q^{-1}_2(t)$ and
\begin{equation}\label{toct}
\begin{aligned}
\frac{dq}{dt}&=\frac{dq_1}{dt}q_2^{-1}-q_1[q_2^{-1}\frac{dq_2}{dt}q_2^{-1}]\\&
=[a_{11}(t)q_1+a_{12}(t)q_2]q_2^{-1}+q_1[q_2^{-1}\{a_{21}(t)q_1+a_{22}(t)q_2\}q_2^{-1}]
\\&=a_{12}(t)+a_{11}(t)q+qa_{22}(t)+qa_{21}(t)q.
\end{aligned}
\end{equation}

We obtain that in the coordinates $\{q_1,q_2\}$ for $D_2$, understood as a subset of $\mathbb{R}^8$, and $w_2$ for $\mathbb{H}{\rm P}^1$, understood as an element of $\mathbb{R}^4$, the $t$-dependent vector field $\pi_{\mathbb{H}*}{X_\mathbb{H}}$ looks like
$$
\pi_{\mathbb{H}*}(X_{\mathbb{H}}(t,q_1,q_2))=\lambda_{w_2}[a_{12}(t)+a_{11}(t)w_2+w_2a_{22}(t)+w_2a_{21}(t)w_2].
$$
And using coordinates for $w_1=q^{-1}$ an analogue result is obtained:
\begin{equation}\label{toct}
\begin{aligned}
\frac{dw_1}{dt}&=-w_1a_{12}(t)w_1-w_1a_{11}(t)-a_{22}(t)w_1-a_{21}(t).
\end{aligned}
\end{equation}

Hence, $\pi_{\mathbb{H}*}X_\mathbb{H}$ is well defined on the whole $\mathbb{H}{\rm P}^1$. In one set of coordinates, it becomes the quaternionic Riccati equations and, in the other, a related one. Thus, quaternionic Riccati equations can be extended to $\mathbb{H}{\rm P}^1\simeq S^4$ \cite{BFLPP02}. A similar result follows for $\mathbb{R}$ and $\mathbb{C}$ by applying the same approach.

The `linearisation' of octonionic Riccati equations is much more subtle because of the lack of associativity.  Due to this, the system (\ref{linric}) is generalised to the system on $\mathbb{O}_\times^2:=\mathbb{O}^2\backslash\{(0,0)\}$ of the form
\begin{equation}\label{linoct}\begin{array}{ll}
\left\{\begin{aligned}
\frac{do_1}{dt}&=[a_{11}(t)(o_1o_2^{-1})]o_2+a_{12}(t)o_2,\\
\frac{do_2}{dt}&=-a_{21}(t)o_1-o_2\{o_1^{-1}[(o_1o_2^{-1})a_{22}(t)]\}o_2,
\end{aligned}\right. &o_1o_2\neq 0,\\ \noalign{\medskip}
\left\{\begin{aligned}
\frac{do_1}{dt}&=a_{11}(t)o_1+a_{12}(t)o_2,\\
\frac{do_2}{dt}&=-a_{21}(t)o_1-a_{22}(t)o_2,
\end{aligned}\right. &o_1o_2= 0,\end{array}
\end{equation}
where $(o_1,o_2)\in\mathbb{O}^2_\times$ and $a_{ij}(t)$, with $i,j=1,2$, are $t$-dependent $\mathbb{O}$-valued functions.

The morphism $\pi_{\mathbb{O}}:\mathbb{O}^2_\times\rightarrow \mathbb{O}{\rm P}^1$ is differentiable and for particular solutions $(o_1(t),o_2(t))$ with $o_2(t)o_1(t)\neq 0$  of the previous system we obtain that  using the coordinate chart $\psi_2$ the curve $\pi_{\mathbb{O}}(o_1(t),o_2(t))$ reads
\begin{equation}\label{toct}
\begin{aligned}
\frac{do}{dt}&=\frac{do_1}{dt}o_2^{-1}-o_1[o_2^{-1}\frac{do_2}{dt}o_2^{-1}]
\\&=
\{[a_{11}(t)(o_1o_2^{-1})]o_2+a_{12}(t)o_2\}o_2^{-1}+o_1[o_2^{-1}\{a_{21}(t)
o_1
\\&\qquad +o_2\{o_1^{-1}[(o_1o_2^{-1})a_{22}(t)]\}o_2\}o_2^{-1}]
\\&=a_{12}(t)+a_{11}(t)o+oa_{22}(t)+oa_{21}(t)o,
\end{aligned}
\end{equation}
where we used Lemma \ref{lemma1} and (\ref{Moufang}). 
For quaternions, system (\ref{linoct}) reduces to a linear system due to associativity. For octonions, the lack of associativity causes the terms on $a_{11}(t)$ and $a_{22}(t)$ to be no longer linear. Therefore, we will restrict the equation to the case of $a_{11}(t)$ and $a_{22}(t)$ being real $t$-dependent functions, when (\ref{toct}) can be proved to be linear. In fact, 
$$
\begin{gathered}
[a_{11}(t)(o_1o_2^{-1})]o_2=a_{11}(t)[(o_1o_2^{-1})o_2]=a_{11}(t)o_1\\
o_2\{o_1^{-1}[(o_1o_2^{-1})a_{22}(t)]\}o_2=a_{22}(t)[o_2\{o_1^{-1}(o_1o_2^{-1})\}o_2]
=a_{22}(t)o_2.
\end{gathered}
$$
 We can now relate the $t$-dependent vector field $X_{\mathbb{O}}$
on $\mathbb{O}^2_\times$ associated with (\ref{linoct}) to a $t$-dependent vector field on the octonionic projective line.
We obtain that in the coordinates $\{o_1,o_2\}$ for $D_2$, thought of as a subset of $\mathbb{R}^8\times\mathbb{R}^8$, and $w_2$ for $\mathbb{O}{\rm P}^1$ the $t$-dependent vector field $\pi_{\mathbb{O}*}X_\mathbb{O}$ looks like
$$
\pi_{\mathbb{O}*}(X_{\mathbb{O}}(t,o_1,o_2))=\lambda_{w_2}[a_{12}(t)+a_{11}(t)w_2+w_2a_{22}(t)+w_2a_{21}(t)w_2].
$$
And using coordinates for $w_1$ an analogue result is obtained. Hence, we obtain that $\pi_*X_\mathbb{O}$ is well defined on the whole $\mathbb{O}{\rm P}^1$. In one set of coordinates, it becomes the $t$-dependent vector field associated with octonionic Riccati equations and  a related one in the other. Therefore, the previous result allows us to extend octonionic Riccati equations with real $t$-dependent functions $a_{11}(t)$ and $a_{22}(t)$ to $\mathbb{O}{\rm P}^1\simeq S^8$.

The whole octonionic Riccati equations can be extended to $S^8$. It is enough to observe that the vector fields needed to describe (\ref{toct}) with real $a_{11}(t)$ and $a_{22}(t)$ generate under commutations all elements of $\mathfrak{conf}(\mathbb{R}^8)\simeq\mathfrak{so}(9,1)$. Let us prove this fact.

\begin{proposition} The Lie brackets among the vector fields contained in 
$
V^{-}:=\langle X^{-}_0,\ldots,X^{-}_7\rangle,$ $ 
V^{+}:=\langle X^{+}_0,\ldots,X^{+}_7\rangle,
$
span the whole Vessiot--Guldberg Lie algebra for octonionic Riccati equations.
\end{proposition}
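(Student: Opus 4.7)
The strategy is to compute the Lie brackets $[X_i^-, X_j^+]$ for $i,j\in\{0,\ldots,7\}$ using Lemma \ref{lem31} and show that, together with the generators of $V^-$ and $V^+$, they already exhaust the $45$-dimensional Vessiot--Guldberg Lie algebra $V\simeq\mathfrak{so}(9,1)$ of the octonionic Riccati equation.

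First, since $X_i^-$ corresponds to the constant $\mathbb{O}$-valued function $o\mapsto e_i$, Lemma \ref{lem31} reduces to
$$
[X_i^-,X_j^+](o) = \lambda_o\bigl(\nabla_{e_i}(oe_jo)\bigr).
$$
Using the alternative law (\ref{Alter}), the expression $(o+\tau e_i)e_j(o+\tau e_i)$ expands unambiguously as $(oe_j)o+\tau\bigl[(oe_j)e_i+(e_ie_j)o\bigr]+\tau^2(e_ie_j)e_i$, so differentiating at $\tau=0$ yields
$$
[X_i^-,X_j^+](o) = \lambda_o\bigl((oe_j)e_i+(e_ie_j)o\bigr).
$$
A short case analysis on $(i,j)$ then matches each bracket to one of the vector fields of Table \ref{table1}: the diagonal brackets $[X_i^-,X_i^+]$ reduce to $\pm 2X^{(0)}$ via $e_i^2=\pm 1$ and alternativity; the brackets $[X_0^-,X_j^+]$ with $j\ge 1$ yield $X_j^{0_L}+X_j^{0_R}=\widetilde X_{0j}$; and for $1\le i\ne j\le 7$ the identities (\ref{anti}) give $[X_i^-,X_j^+]=-[X_j^-,X_i^+]=\widetilde X_{ij}$.

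Next, I identify the span of $[V^-,V^+]$. The 28 elements $\widetilde X_{ij}=2\bigl(o_i\,\partial/\partial o_j - o_j\,\partial/\partial o_i\bigr)$ with $0\le i<j\le 7$ are the standard antisymmetric rotation generators of $\mathfrak{so}(8)$, hence by Lemma \ref{LemCon} they form a basis of the Killing Lie subalgebra $V_\circ^{(0)}$. Together with the dilation $X^{(0)}$ produced by the diagonal brackets, $[V^-,V^+]$ therefore spans the whole $29$-dimensional subspace $\langle X^{(0)}\rangle\oplus V_\circ^{(0)}$.

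Finally, since $V^-$ consists of constant vector fields, $V^+$ of vector fields with quadratic coefficients, and $[V^-,V^+]$ of linear vector fields, the three subspaces meet trivially, so the Lie algebra generated by $V^-\cup V^+$ has dimension at least $8+8+29=45$. As this generated algebra is contained in the ambient Vessiot--Guldberg Lie algebra $V\simeq\mathfrak{so}(9,1)$ of dimension $45$, equality must hold. The main obstacle is the clean derivation of the formula for $[X_i^-,X_j^+]$ and the recognition that the resulting antisymmetric linear vector fields already span all of $\mathfrak{so}(8)$: the non-associativity of $\mathbb{O}$ forces careful use of the alternative law (\ref{Alter}) and the anti-commutation law (\ref{anti}) rather than naive symbolic manipulation.
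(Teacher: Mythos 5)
Your proposal is correct and follows essentially the same route as the paper: compute the mixed brackets $[X_i^-,X_j^+]$ via Lemma \ref{lem31}, identify the resulting $28$ antisymmetric linear vector fields $\widetilde X_{ij}$ together with the dilation $X^{(0)}$, and conclude by the degree-based linear independence count $8+8+29=45=\dim\mathfrak{so}(9,1)$. The only cosmetic differences are your choice of parenthesization $(oe_j)e_i+(e_ie_j)o$ versus the paper's $e_i(e_jo)+o(e_je_i)$ (equal by the linearized flexible law) and your explicit appeal to Lemma \ref{LemCon} to identify the span of the $\widetilde X_{ij}$ with $V_\circ^{(0)}\simeq\mathfrak{so}(8)$.
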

\begin{proof}
Observe that
$$
[X^{-}_i,X^{+}_j]=\lambda_o\{\nabla_{e_i}o(e_jo)-\nabla_{oe_jo}e_i\}=
\lambda_o\{e_i(e_jo)+o(e_je_i)\},\quad i,j=0,\ldots,7.
$$
Hence, $\widetilde X_{ij}:=[X^{-}_i,X^{+}_j]=X^{0_R}_{j\cdot i}+X^{(0)}_{ij}$, where $X^{0_R}_{j\cdot i}(o):=\lambda_o[o(e_je_i)]$  and $X^{(0)}_{ij}(o):=\lambda_o\{e_i(e_jo)\}$ for $0\leq i<j\leq 7$. Additionally, $[X^{-}_i,X^{+}_i]=2(-1)^{0^i}X^{(0)}$ for $i=0,\ldots,7$. 
  We therefore obtain a family of 29 $\mathbb{R}$-linearly independent vector fields
 $\widetilde X_{ij}$, with $0\leq i<j\leq 7$ and $i=j=0$, which take the form given in Table \ref{table1}. All these vector fields are linear. Since the coefficients of the vector fields $V^{+}$ are quadratic on the variables $o_0,\ldots, o_7$ and the coefficients of the vector fields $V^{-}$ are non-zero constants, we obtain that all the previous vector fields are linearly independent and span the whole Vessiot--Guldberg Lie algebra $V^\mathbb{O}$.

\end{proof}

As the vector fields related to (\ref{toct}) are differentiable, can be extended to $\mathbb{O}{\rm P}^1$, and generate along with their Lie brackets $V^\mathbb{O}$, then all such vector fields do also. Since $\pi_{\mathbb{O}*}$ is a Lie algebra morphism from  vector fields on $\mathbb{O}_\times^2$ onto $\mathbb{O}{\rm P}^1$ and the vector fields of $V^{-}$ and $V^{+}$ are projections of well-defined linear vector fields on $\mathbb{O}_\times^2$, then the $t$-dependent vector field associated with (\ref{toct}) can be written as the projection of a $t$-dependent linear vector field on $\mathbb{O}^2_\times$ giving rise to a linear system of differential equations linearising octonionic Riccati equations.  So, we have proved the following result.
\begin{theorem} Every octonionic Riccati equation can be extended to $\mathbb{O}{\rm P}^1$ and lifted to a linear system on $\mathbb{O}^2_\times$.
\end{theorem}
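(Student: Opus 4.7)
The plan is to assemble the theorem from the pieces already developed in the preceding paragraphs, proceeding in three steps: construct explicit linear lifts for the generators $X_i^{\pm}$, propagate both linearity and $\mathbb{O}{\rm P}^1$-extendability through Lie brackets using the preceding proposition, and then lift an arbitrary octonionic Riccati equation by writing it in such a basis.

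First I would isolate the genuinely linear slice of (\ref{linoct}). Setting $a_{11}(t)=a_{22}(t)=0$ wipes out every non-associative contribution and collapses (\ref{linoct}) to the honest linear system $do_1/dt = a_{12}(t)o_2$, $do_2/dt = -a_{21}(t)o_1$ on $\mathbb{O}^2_\times$. Choosing $a_{12}(t)=e_i$ (respectively $a_{21}(t)=e_i$) with the remaining coefficient zero yields linear vector fields $\widetilde{X}_i^{-}$ and $\widetilde{X}_i^{+}$ on $\mathbb{O}^2_\times$. The computation (\ref{toct}) then shows $\pi_{\mathbb{O}*}\widetilde{X}_i^{\pm}=X_i^{\pm}$, and a parallel calculation in the chart $\psi_1$ (where the coordinate is $w_1=o_2 o_1^{-1}$) shows that the pushforward is well defined on $D_1\cap D_2$ and so extends smoothly to all of $\mathbb{O}{\rm P}^1\simeq S^8$.

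Next I would bootstrap from the generators to the whole algebra. By the previous proposition, the iterated Lie brackets of $V^{-}\cup V^{+}$ exhaust $V^{\mathbb{O}}\simeq \mathfrak{so}(9,1)$. Since $\pi_{\mathbb{O}*}$ is a Lie algebra morphism when restricted to projectable vector fields, and since the bracket of two linear vector fields on $\mathbb{O}^2_\times\subset \mathbb{R}^{16}$ is again linear, the iterated brackets of the $\widetilde{X}_i^{\pm}$ form a Lie algebra of linear, $\pi_{\mathbb{O}}$-projectable vector fields on $\mathbb{O}^2_\times$ whose $\pi_{\mathbb{O}*}$-image is exactly $V^{\mathbb{O}}$. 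Because smoothness on $\mathbb{O}{\rm P}^1$ is preserved under Lie bracketing, every element of $V^{\mathbb{O}}$ also extends smoothly to $\mathbb{O}{\rm P}^1$.

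Finally, an arbitrary octonionic Riccati equation (\ref{OctRicc}) is a $t$-dependent curve $X_t$ in $V^{\mathbb{O}}$. Writing $X_t=\sum_\alpha b_\alpha(t)Y_\alpha$ in a basis $\{Y_\alpha\}$ of $V^{\mathbb{O}}$ and replacing each $Y_\alpha$ by its linear lift $\widetilde{Y}_\alpha$ constructed in the previous step, the sum $\widetilde{X}_t:=\sum_\alpha b_\alpha(t)\widetilde{Y}_\alpha$ defines a $t$-dependent linear vector field on $\mathbb{O}^2_\times$ satisfying $\pi_{\mathbb{O}*}\widetilde{X}_t=X_t$, which is the desired linear lift, while $X_t$ extends smoothly to $\mathbb{O}{\rm P}^1$. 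The main delicate point is that $\pi_{\mathbb{O}*}$ is only a Lie algebra morphism on projectable vector fields, and that the non-associativity of $\mathbb{O}$ must remain consistent with forming iterated brackets; however, projectability is inherited along the construction starting from the explicitly projectable $\widetilde{X}_i^{\pm}$, and the non-associativity is already absorbed by Lemma \ref{lemma1} and the Moufang identities (\ref{Moufang}) used in (\ref{toct}).
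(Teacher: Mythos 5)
Your proposal is correct and follows essentially the same route as the paper: project the linear slice of (\ref{linoct}) to obtain $X_i^{\pm}$ on $\mathbb{O}{\rm P}^1$, invoke the preceding proposition that brackets of $V^-$ and $V^+$ generate all of $V^{\mathbb{O}}$, and use that $\pi_{\mathbb{O}*}$ is a Lie algebra morphism preserving linearity and extendability to conclude for an arbitrary curve in $V^{\mathbb{O}}$. The only (harmless) difference is that you set $a_{11}(t)=a_{22}(t)=0$ outright, whereas the paper keeps them as real-valued functions, which is equally sufficient since the proposition makes the $a_{11}$, $a_{22}$ terms redundant for generating the Vessiot--Guldberg Lie algebra.
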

\section{Symplectic structures and NDA Riccati equations}
A relevant question on the study of Lie systems is the existence of a symplectic form turning the vector fields of the
Vessiot--Guldberg Lie algebra into Hamiltonian vector fields. We hereafter analyse the existence of such a symplectic
structure for a general Vessiot--Guldberg Lie algebra on a manifold and, in particular, in the case of octonionic Riccati
equations.

%

\begin{proposition} No NDA Riccati equation admits a Vessiot--Guldberg Lie algebra of Hamiltonian vector fields for generic $t$-dependent coefficients.
\end{proposition}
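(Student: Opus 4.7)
The plan is to exploit a volume-form obstruction coming from any hypothetical symplectic structure. First I reduce to the minimal Vessiot--Guldberg Lie algebra: for generic choices of the coefficients $b^{-}(t), b^{0_L}(t), b^{0_R}(t), b^{+}(t)$, the minimal Lie algebra $V^X$ of the $t$-dependent vector field associated with (\ref{NDARicc}) coincides with the full algebra $V_A\simeq\mathfrak{so}(\dim A+1,1)$ of the preceding corollary, since the span of the evaluations $\{X_t\}_{t\in\mathbb{R}}$ contains the generating vector fields in (\ref{newform}), and these in turn generate $V_A$ under Lie brackets. Every Vessiot--Guldberg Lie algebra of the equation must therefore contain $V_A$, and if such an algebra consists of Hamiltonian vector fields with respect to some symplectic form $\omega$ defined on an open set $U\subseteq A$, then so must $V_A$ itself. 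It therefore suffices to prove that $V_A$ cannot be realised as a Lie algebra of Hamiltonian vector fields for any symplectic form on any open subset of $A$.

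The core observation is that every Hamiltonian vector field on $(U,\omega)$ preserves the Liouville volume form $\mu:=\omega^n$, with $2n=\dim A$. Writing $\mu=\rho(o)\,do_0\wedge\cdots\wedge do_{\dim A-1}$ in Cartesian coordinates, with $\rho$ smooth and nowhere vanishing on $U$, and recalling from (\ref{Field}) that $V_A$ contains both the translations $X^{-}_i=\partial/\partial o_i$ and the Euler--dilation field $X^{(0)}=\sum_i o_i\,\partial/\partial o_i$, the conditions $\mathcal{L}_{X^{-}_i}\mu=0$ force $\partial\rho/\partial o_i=0$ on $U$ for every index $i$, so $\rho$ is locally constant on each connected component of $U$. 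A direct computation then yields
\begin{equation*}
\mathcal{L}_{X^{(0)}}\mu=\bigl(X^{(0)}\rho+(\dim A)\,\rho\bigr)\,do_0\wedge\cdots\wedge do_{\dim A-1}=(\dim A)\,\rho\,do_0\wedge\cdots\wedge do_{\dim A-1},
\end{equation*}
which is nonzero because $\rho$ is nowhere vanishing and $\dim A\geq 1$. This contradicts the requirement that $X^{(0)}$ preserve $\mu$, and hence $V_A$ admits no such symplectic realisation. The case $A=\mathbb{R}$ is immediate, since a one-dimensional manifold carries no symplectic form.

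The main obstacle I anticipate is the genericity clause: one must verify that for almost every tuple of coefficients the vector fields $X^{-}_i$ and $X^{(0)}$ actually belong to the minimal Vessiot--Guldberg Lie algebra, rather than dropping into a proper Lie subalgebra on which the divergence argument could accidentally be avoided. This follows by direct inspection of (\ref{newform}): the coefficients of the $X^{-}_i$ and of $X^{(0)}$ are independent $t$-dependent functions, so their presence in the Lie span of $\{X_t\}_{t\in\mathbb{R}}$ is stable under generic perturbations. A secondary subtlety is that the hypothetical $\omega$ is only assumed to exist on an open subset $U\subseteq A$ rather than globally, but the identities $\mathcal{L}_X\mu=0$ are pointwise conditions on $U$, so the contradiction already appears on any connected component of $U$, which is harmless.
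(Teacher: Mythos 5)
Your argument is correct, but it reaches the contradiction through the Liouville volume rather than through the symplectic form itself, which is what the paper does. The paper's proof also isolates the translations $X^{-}_i$ and the dilation $X^{(0)}$, but stays at the level of the $2$-form: since the $X^{-}_i$ pairwise commute and are Hamiltonian, each function $\omega(X^{-}_i,X^{-}_j)$ is annihilated by every $X^{-}_k$, and as these span $TA$ the functions are constant; the grading relation $[X^{(0)},X^{-}_j]=-X^{-}_j$ together with $\mathcal{L}_{X^{(0)}}\omega=0$ then forces $0=\mathcal{L}_{X^{(0)}}\bigl(\omega(X^{-}_i,X^{-}_j)\bigr)=-2\,\omega(X^{-}_i,X^{-}_j)$, so $\omega=0$. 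Your route is more elementary in its ingredients (Liouville's theorem plus the divergence $\operatorname{div}X^{(0)}=\dim A$), it transparently disposes of the odd-dimensional case $A=\mathbb{R}$, and it would equally obstruct the preservation of any volume form, not only a symplectic one. The paper's route buys a slightly stronger conclusion --- it shows that $\omega$ itself, not merely its top exterior power, must vanish, so the same computation rules out degenerate invariant $2$-forms as well --- and it exhibits the obstruction as a pure consequence of the $\mathbb{Z}$-grading of the Vessiot--Guldberg Lie algebra, which is the structural feature the paper emphasises throughout. Your treatment of the genericity clause (that $X^{-}_i$ and $X^{(0)}$ lie in the minimal Lie algebra for generic coefficients, hence in every Vessiot--Guldberg Lie algebra) is a point the paper leaves implicit, and it is handled adequately.
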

\begin{proof} 
Let us prove our claim by reduction to absurd. Assuming the NDA Riccati equation to be  a Hamiltonian system for generic $t$-dependent coefficients implies that there exists a symplectic form $\omega$ on $A$ turning the Vessiot--Guldberg Lie algebra $V^A$ of the NDA Riccati equation into Hamiltonian vector fields. In particular, $\mathcal{L}_{X^{-}_i}\omega=\mathcal{L}_{X^{(0)}}\omega=0$ for $i=0,\ldots,7$. Since the vector fields $\{X^{-}_i\}_{i=0,\ldots,7}$ commute between themselves and they are by assumption Hamiltonian vector fields, we obtain
$
\mathcal{L}_{X^{-}_k}[\omega(X_i^{-},X_j^{-})]=0$  for $i,j,k=0,\ldots,7$. Since the $X_i^{-}$, with $i=0,\ldots,7$, span $TA$, then the functions $\omega(X_i^{-},X_j^{-})$ are constants. As $\mathcal{L}_{X^{(0)}}\omega=0$ and $[X^{(0)},X^{-}_j]=-X^{-}_j$ with $j=0,\ldots,7$, then
$$
0=\mathcal{L}_{X^{(0)}}(\omega(X_i^{-},X_j^{-}))=
-2\omega(X_i^{-},X_j^{-}),\qquad i,j=0,\ldots,7.
$$
Then, $\omega=0$, which is against our initial assumption. Hence, $V^A$ is not a Lie algebra of Hamiltonian vector fields relative to any symplectic structure.
\end{proof}
As general NDA Riccati equations do not posses Vessiot--Guldberg Lie algebras of Hamiltonian vector fields, there are many geometric methods, e.g. the co-algebra methods for superposition rules \cite{BCHLS13}, which cannot be used. Nevertheless, particular NDA Riccati equations may  admit a Vessiot--Guldberg Lie algebra of Hamiltonian vector fields.
The case of complex Riccati equations with $t$-dependent real coefficients was considered in \cite{EHLS16}.
Let us turn now to the octonionic Riccati equation
\begin{equation}\label{realOct}
\frac{do}{dt}=a(t)+b(t)o+c(t)o^2,
\end{equation}
where $a(t)$, $b(t)$, $c(t)$ are real $t$-dependent functions. A Vessiot--Guldberg Lie algebra of (\ref{realOct})
is spanned by
\begin{equation}\label{octField}
\begin{gathered}
\!\!X^{-}_{\mathbb{O}}\!=\!\frac{\partial}{\partial o^0},\,\,
X^{(0)}_{\mathbb{O}}\!=\!\sum_{i=0}^7 o_i\frac{\partial}{\partial o_i},\,\,
X^{+}_{\mathbb{O}}\!=\!\left(o_0^2-\sum_{i=1}^7 o_i^2\right)\frac{\partial}{\partial o_0}+2o_0\sum_{i=1}^7o_i\frac{\partial}{\partial o_i}.
\end{gathered}
\end{equation}
Observe that
\begin{equation*}\label{bracom}
[X^{-}_{\mathbb{O}},X^{(0)}_{\mathbb{O}}]=X^{-}_{\mathbb{O}},\quad
[X^{-}_{\mathbb{O}},X^{+}_{\mathbb{O}}]=2X^{(0)}_{\mathbb{O}},\quad
[X^{(0)}_{\mathbb{O}},X^{+}_{\mathbb{O}}]=X^{+}_{\mathbb{O}}.
\end{equation*}
Therefore, $V^{RO}:=\langle X^{-}_{\mathbb{O}}, X^{+}_{\mathbb{O}}, X^{(0)}_{\mathbb{O}}\rangle \simeq \mathfrak{sl}(2,\mathbb{R})$. We shall find a symplectic form on $\mathbb{O}$
turning the vector fields (\ref{octField}) into (locally) Hamiltonian with respect to it, i.e.
$\mathcal{L}_X\omega_{\mathbb{O}}=0$ for every $X\in V^{RO}$.
We introduce the new local coordinates on $\mathbb{O}$ given by  $\{o_0, \rho:=\sqrt{\sum_{i=1}^7o_i^2},
\alpha, \beta, \gamma, \psi, \theta, \phi\}$, where the latter six variables form a local coordinate
system on the sphere $S_\rho^6$ of radius $\rho$. In the new coordinates
\begin{equation*}
\begin{gathered}
X^{-}_{\mathbb{O}}=\frac{\partial}{\partial o_0},\,\,
X^{(0)}_{\mathbb{O}}=o_0\frac{\partial}{\partial o_0}+\rho\frac{\partial}{\partial\rho},\,\,
X^{+}_{\mathbb{O}}=(o_0^2-\rho^2)\frac{\partial}{\partial o_0}+2o_0\rho\frac{\partial}{\partial \rho}.
\end{gathered}
\end{equation*}
We now define the symplectic form
\begin{equation}\label{octsympl}
\omega_{\mathbb{O}}:=\frac{do_0\wedge d\rho}{\rho^2}+d\alpha\wedge d\beta+d\gamma\wedge d\psi+d\theta\wedge d\phi.
\end{equation}
Some Hamiltonian functions of (\ref{octField}) with respect to (\ref{octsympl}) are
\begin{equation}\label{hamfun}
f_{X^{-}_{\mathbb{O}}}=-\frac{1}{\rho},\quad
f_{X^{(0)}_{\mathbb{O}}}=-\frac{o_0}{\rho},\quad
f_{X^{+}_{\mathbb{O}}}=-\frac{(o_0)^2+\rho^2}{\rho},
\end{equation}
and
$$
\{f_{X^{-}_{\mathbb{O}}},f_{X^{(0)}_{\mathbb{O}}}\}=-f_{X^{-}_{\mathbb{O}}},\quad \{f_{X^{-}_{\mathbb{O}}},f_{X^{+}_{\mathbb{O}}}\}=-2f_{X^{(0)}_{\mathbb{O}}},\quad \{f_{X^{(0)}_{\mathbb{O}}},f_{X^{+}_{\mathbb{O}}}\}=-f_{X^{+}_{\mathbb{O}}}.
$$

Using the same procedure as before, we obtain a symplectic form for the quaternionic Riccati equation on points with $\theta\neq 0$ and $\rho\neq 0$:
\begin{equation}\label{quasympl}
\omega_{\mathbb{H}}=\frac{dq_0\wedge d\rho}{\rho^2}+\sin\theta d\theta\wedge d\phi,
\end{equation}
where $\rho$, $\theta$ and $\phi$ are the spherical coordinates on $\mathbb{R}^3$. 

\section{Applications in quaternionic quantum mechanics}
Some attempts to generalize quantum mechanics to the realm of the quaternions
have been performed \cite{Em63,FJS62,Ka60}. Much has been done to establish whether this is an appropriate approach \cite{Ad85,Ad86}. We here show that
Riccati equations over $\mathbb{H}$ can be employed in quaternionic quantum mechanics \cite{physics1995quaternionic}.

The {\it quaternionic Schr\"odinger equation} \cite{LDN02} is a generalization of the typical Schr\"odinger equation of the form
\begin{equation}\label{QSE}
\frac{\partial \Phi}{\partial t}=\left\{\frac {\rm i}\hbar \left[\frac{\hbar^2}{2m}\nabla^2-V\right]+\frac{{\rm j}}{\hbar}W\right\}
\Phi,
\end{equation}
with $\Phi:\mathbb{R}^{1+3}\ni(t,{\bf r}) \mapsto \Phi(t,{\bf r})\in \mathbb{H}$ being a quaternionic-valued function, $W:\mathbb{R}^{1+3}\rightarrow \mathbb{C}$ and $V:\mathbb{R}^{1+3}\rightarrow \mathbb{R}$ being potentials,
$\nabla^2$ being the Laplacian on $\mathbb{R}^3$ relative to the Euclidean metric, and $1,{\rm i},{\rm j},{\rm k}$ being the elements $e_0,e_1,e_2,e_3$ generating $\mathbb{H}$. As usual, $\hbar, m$ are the
Planck constant and the mass of the particle, respectively. It worth noting that (\ref{QSE}) is $\mathbb{H}$-linear relative to multiplication on the right.
 
Let us study the solutions of (\ref{QSE}) on $\mathbb{R}^{1+1}$ taking the form of stationary states, namely
$
\Phi(x,t)=\Psi(x)\exp\left[-{\rm i}Et/{\hbar}\right],\, E\in\mathbb{R}.
$
Hence,
\begin{equation}\label{tindqS}
\left\{{\rm i}\left[\frac{\hbar^2}{2m}\nabla^2-V\right]+{{\rm j}}W\right\}\Psi+\Psi  {\rm i} E=0.
\end{equation}
Let us show that the solution for $E=0$ can be solved by using 
Riccati equations over $\mathbb{H}$. Define a new variable
$
u:=\frac{{\rm d}\Psi}{{\rm d}x} \Psi^{-1}.
$
Using Lemma \ref{lemma1} and (\ref{tindqS}), we obtain that for $E=0$:
$$
\frac{{\rm d}u}{{\rm d}x}=\frac{{\rm d}^2 \Psi}{{\rm d}x^2} \Psi^{-1}
-\frac{{\rm d} \Psi}{{\rm d}x} \Psi^{-1}\frac{{\rm d} \Psi}{{\rm d}x} \Psi^{-1}
=-u^2+\frac{2m{\rm k}}{\hbar^2}W+\frac{2m}{\hbar^2}V.
$$
The latter can be considered as a quaternionic Riccati equation of the form
\begin{equation}\label{RiccQuan}
\frac{{\rm d}u}{{\rm d}x}=-u^2+b(x),\qquad b(x):=\frac{2m{\rm k}}{\hbar^2}W+\frac{2m}{\hbar^2}V.
\end{equation}
From the solution of this equation,
 the  solution to  (\ref{QSE}) with $E=0$ follows by solving  ${\rm d}\Psi/{\rm d}x=u(x)\Psi$.

Let us determine the minimal Lie algebra for (\ref{RiccQuan}) when $\{b(x)\}_{x\in\mathbb{R}}$ spans the whole $\mathbb{H}$. We
rewrite the equation (\ref{RiccQuan}) as
$$
\frac{{\rm d}u}{{\rm d}x}=-X_0^{+}(u)+\sum_{i=0}^3b_i(x)X^{-}_i(u),
$$
where $X_0^{+}(u)=\lambda_u(u^2)$, $X_i^{-}(u)=\lambda_u(e_i)$ for $i=0,\ldots,3$ and $\{e_i\}_{i=0}^3$ is the
standard basis of
the quaternions. We have to determine the smallest Lie algebra $V$ of vector fields containing $X_0^{+},X_0^{-}$.
Observe that
$$
[X_0^{-},X_0^{+}](u)=\lambda_u(\nabla_{e_0}u^2)=2\lambda_u(u)\Rightarrow X^{(0)}:=\lambda_u(u)\in V.
$$
Additionally,
$
[X_i^{-},X_0^{+}](u)=\lambda_u(e_iu+ue_i)$, for $i=1,2,3$.
Let us denote $X_{0i}^{(0)}:=\frac{1}{2}\lambda_u(e_iu+ue_i)$ for $i=0,\ldots,3$. These vector fields also belong to $V$.
Next, 
$$
[X_{0i}^{(0)},X^{+}_0](u)=\lambda_u(\nabla_{\frac{1}{2}(e_iu+ue_i)}u^2-\nabla_{u^2}(e_iu+ue_i)/2)=\lambda_u(ue_iu)=X^{+}_i\in V,
$$
for $j=1,2,3.$ Finally, 
$$
[X_j^{-},X_i^{+}](u)=\lambda_u(e_j(e_iu)+u(e_ie_j))=\widetilde{X}_{ji},
$$
for $1\leq i<j\leq 3$. Hence, $V$ contains a basis of fifteen linearly independent vector fields 
and $V\simeq \mathfrak{so}(5,1)$.

\section{Conclusions and Outlook}
We have proposed a new type of Riccati equations over the octonions and normed division algebras. We have proved that this is the only possible generalisation being a Lie system. We
have showed that NDA Riccati equations can be written as particular types of conformal Riccati equations. We provided a new procedure to link NDA Riccati equations to a type of linear systems over normed division algebras and we demonstrated that NDA Riccati equations can be extended to projective  lines. As a byproduct, other results for different types of Riccati equations in the literature have been retrieved. We have also studied several types of NDA equations admitting compatible symplectic structures. Finally, quaternionic Riccati equations have been applied in quaternionic quantum mechanics.

This work can be considered as a first study on conformal Lie systems, namely Lie systems with a Vessiot--Guldberg Lie algebra of conformal vector fields. In the future we expect to  study the general properties of these Lie systems by analysing, for instance, the use of conformal structures to obtain their superposition rules, constants of motions and Lie symmetries. We also plan to investigate Riccati equations over composition and Clifford algebras as well as to develop a theory of manifolds modeled on such structures. Finally, we aim to look for new potential applications of our methods.

\section{Acknowledgements}
 J. de Lucas acknowledges partial support by the National Science Centre (POLAND) under a grant HARMONIA DEC-2012/04/M/ST1/0533.
 M. Tobolski acknowledges partial support from projects MTM2011-15725-E (Ministerio de Ciencia e Innovaci\'on, Spain)
 and ESF Fizyka Plus POKL.04.01. 02-00-034/11. S. Vilari\~no acknowledges partial support from MTM2011-15725-E and MTM2011-2585 (Ministerio de Ciencia e Innovaci\'on) and E24/1 (Gobierno de
Arag\'on).

\end{document}